\newcommand{\beq}{\begin{equation}}
\newcommand{\eeq}{\end{equation}}
\newcommand{\lb}{\label}
\newcommand{\beqar}{\begin{eqnarray}}
\newcommand{\eeqar}{\end{eqnarray}}
\newcommand{\bit}{\begin{itemize}}
\newcommand{\eit}{\end{itemize}}
\newcommand{\benum}{\begin{enumerate}}
\newcommand{\eenum}{\end{enumerate}}
\newcommand{\barr}{\begin{array}}
\newcommand{\earr}{\end{array}}
\def\ds{\displaystyle}
\newcommand{\ra}[1]{\renewcommand{\arraystretch}{#1}}
\newtheorem{theorem}{Theorem}[section]
\newcommand{\jump}[2]{[\mbox{\hspace{-#1em}}[#2]\mbox{\hspace{-#1em}}]}
\newcommand{\modI}{\text{I}}
\newcommand{\modII}{\text{II}}
\newcommand{\modIII}{\text{III}}
\def\XXint#1#2#3{{\setbox0=\hbox{$#1{#2#3}{\int}$}
   \vcenter{\hbox{$#2#3$}}\kern-.5\wd0}}
\def\b0{\mbox{\boldmath $0$}}
\def\be{\mbox{\boldmath $e$}}
\def\bn{\mbox{\boldmath $n$}}
\def\bp{\mbox{\boldmath $p$}}
\def\bu{\mbox{\boldmath $u$}}
\def\bx{\mbox{\boldmath $x$}}
\def\bA{\mbox{\boldmath $A$}}
\def\bB{\mbox{\boldmath $B$}}
\def\bE{\mbox{\boldmath $E$}}
\def\bF{\mbox{\boldmath $F$}}
\def\bG{\mbox{\boldmath $G$}}
\def\bI{\mbox{\boldmath $I$}}
\def\bR{\mbox{\boldmath $R$}}
\def\bU{\mbox{\boldmath $U$}}
\newcommand{\bsigma}{\mbox{\boldmath $\sigma$}}
\newcommand{\bSigma}{\mbox{\boldmath $\Sigma$}}
\newcommand{\bxi}{\mbox{\boldmath $\xi$}}
\def\f0{\ensuremath{\mathbb{O}}}
\newcommand{\mF}{\ensuremath{\mathcal{F}}}
\newcommand{\mI}{\ensuremath{\mathcal{I}}}
\newcommand{\mK}{\ensuremath{\mathcal{K}}}
\newcommand{\mP}{\ensuremath{\mathcal{P}}}
\newcommand{\mQ}{\ensuremath{\mathcal{Q}}}
\newcommand{\mS}{\ensuremath{\mathcal{S}}}
\newcommand{\bmA}{\mbox{\boldmath $\mathcal{A}$}}
\newcommand{\bmB}{\mbox{\boldmath $\mathcal{B}$}}
\newcommand{\bmE}{\mbox{\boldmath $\mathcal{E}$}}
\newcommand{\bmR}{\mbox{\boldmath $\mathcal{R}$}}
\def\Re{\mathop{\mathrm{Re}}}
\newcommand{\sign}{\mathop{\mathrm{sign}}}
\newcommand{\arctanh}{\mathop{\mathrm{arctanh}}}
\newcommand{\Reals}{\ensuremath{\mathbb{R}}}
\def\IJSS{{\it Int.\ J.\ Solids Struct.}\ }
\def\JAM{{\it ASME J.\ Appl.\ Mech.}\ }
\def\JMPS{{\it J.\ Mech.\ Phys.\ Solids}\ }
\def\PRSA{{\it Proc.\ R.\ Soc. A}\ }
\def\ZAMM{{\it Z.\ Angew.\ Math.\ Mech.}\ }
\title{Integral identities for a semi-infinite interfacial crack\\ in 2D and 3D elasticity}
\author{A. Piccolroaz\footnote{Corresponding author: e-mail: roaz@ing.unitn.it; phone: +39\,0461\,282583.}}
\author{G. Mishuris}
\affil{Institute of Mathematical and Physical Sciences, Aberystwyth University, Wales, U.K.}
\begin{document}

\maketitle

\begin{abstract}

The paper is concerned with the problem of a semi-infinite crack at the interface between two dissimilar elastic half-spaces, loaded by a general asymmetrical
system of forces distributed along the crack faces. On the basis of the weight function approach and the fundamental reciprocal identity (Betti formula), we
formulate the elasticity problem in terms of singular integral equations relating the applied loading and the resulting crack opening.
Such formulation is fundamental in the theory of elasticity and extensively used to solve several problems in linear elastic fracture mechanics (for instance
various classic crack problems in homogeneous and heterogeneous media). This formulation is also crucial in important recent multiphysics applications, where the
elastic problem is coupled with other concurrent physical phenomena. A paradigmatic example is hydraulic fracturing, where the elasticity equations are coupled with
fluid dynamics.

\vspace{3mm}
{\it Keywords:} Interfacial crack; Singular integral; Betti identity; Weight function; Hydraulic fracture.

\end{abstract}



\section{Introduction}
\label{sec1}

Three-dimensional problems of linear elasticity can be solved nowadays by a variety of techniques. Purely numerical methods, such as finite element and boundary
element techniques, are very popular and extensively used, thanks to the computational power of modern computers. However, analytical methods, such as integral
transforms, variational inequalities, and singular integral equations, are still of great interest, for instance in deriving fundamental solutions for some basic
geometries (to be employed in numerical methods), in the assertion of existence and uniqueness of the solution, and in bifurcation theory (Bigoni and Capuani, 2002;
Bigoni and Dal Corso, 2008).

In particular, the method of singular integral equations was first developed for two-dimensional problems (Muskhelishvili, 1953) and later extended to
three-dimensional problems by the rapidly developing theory of multi-dimensional singular integral operators (Kupradze et al., 1979; Mikhlin and Prossd\"{o}rf, 1980).
For crack problems, the general approach to derive the singular integral formulation is based on the Green's function method (Weaver, 1977; Budiansky and Rice, 1979,
Linkov et al., 1997). This method consists in two steps. First, the stresses at a point $\bx$ inside a body containing a discontinuity surface $S$ are expressed in
the form
\beq
\sigma_{ij} (\bx) = \int_S K_{ijk}(\bx,\bxi) \Delta u_k(\bxi) d\bxi,
\eeq
where $K_{ijk}(\bx,\bxi) = c_{ijst} (\partial/\partial x_t) T^+_{sk}(\bx,\bxi)$ and $T^+_{sk}(\bx,\bxi)$ is the traction associated with the Green's function
$U_{sp}(\bx,\bxi)$, according to the equation $T^+_{sk}(\bx,\bxi) = n^+_l c_{klpq} U_{sp,q}(\bx,\bxi)$. Here $c_{ijst}$ is the elasticity tensor, $n_l^+$ is the 
outward normal on the $+$ side of the surface and the displacement discontinuity is denoted by $\Delta u_k = u_k^- - u_k^+$. The second step consists in taking the 
limit as the point $\bx$ approaches a point $\bx_0$ on the surface $S$, to obtain the tractions on the surface
\beq
\label{limit}
p_i^+(\bx_0) = \lim_{\bx \to \bx_0} \int_S n_j^+(\bx) K_{ijk}(\bx,\bxi) \Delta u_k(\bxi) d\bxi.
\eeq
The described procedure works when the tractions on the discontinuity surface $S$ are symmetrical. Moreover, the limit in (\ref{limit}) is not straightforward and due 
care should be taken to ensure the convergence of the integrals.

In this paper, we consider the three-dimensional problem of a semi-infinite crack at the interface between two dissimilar elastic half-spaces and apply a method which
avoids the use of the Green's function and the tedious limiting procedure. We also do not use the assumption of symmetrical load.

Our results are based on integral transforms and two fundamental notions of linear elasticity: the Betti reciprocal theorem, on the one hand, and the weight function 
approach, on the other. 
The Betti identity has been extensively used in the perturbation analysis of two and three-dimensional cracks (Willis and Movchan, 1995; Bercial-Velez et al., 2005).
The concept of weight function was introduced in linear fracture mechanics by Bueckner (1970) for two-dimensional problem and later extended to general
three-dimensional problems by Willis and Movchan (1995), Piccolroaz et al. (2007).

We consider a semi-infinite crack at the interface between two dissimilar elastic half-spaces $\Reals^3_\pm = \{ \bx = (x_1,x_2,x_3) \in \Reals^3: \pm x_2 >0 \}$,
see Fig.~\ref{fig01}. The two elastic half-spaces are assumed to be isotropic with shear modulus and Poisson's ratio denoted by $\mu_\pm$ and $\nu_\pm$,
respectively. The crack lies on the half-plane $\Reals^2_- = \{ \bx = (x_1,x_2,x_3) \in \Reals^3: x_1 < 0, x_2 = 0 \}$. The crack faces are loaded by a system of,
not necessarily symmetrical, distributed forces $\bp_\pm$.
\begin{figure}[!htcb]
\centering
\includegraphics[width=10cm]{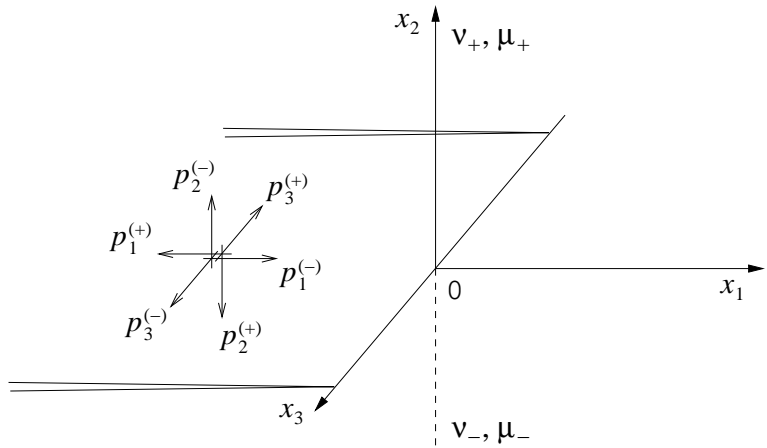}
\caption{\footnotesize Semi-infinite interfacial crack loaded by asymmetrical forces on the crack faces.}
\label{fig01}
\end{figure}

\noindent
It is convenient to introduce the symmetrical and skew-symmetrical parts of the loading as follows
\beq
\langle \bp \rangle = \frac{1}{2}(\bp_+ + \bp_-), \quad \jump{0.15}{\bp} = \bp_+ - \bp_-,
\eeq
where we used standard notations to denote the average, $\langle f \rangle$, and the jump, $\jump{0.15}{f}$, of a function $f$ across the plane containing the crack,
$x_2 = 0$,
\beq
\langle f \rangle(x_1,x_3) = \frac{1}{2} [f(x_1,0^+,x_3) + f(x_1,0^-,x_3)], \quad
\jump{0.15}{f}(x_1,x_3) = f(x_1,0^+,x_3) - f(x_1,0^-,x_3).
\eeq
In Section \ref{sec2}, we introduce the fundamental reciprocal identity and the concept of weight function, as special singular solution to the homogeneous crack
problem. These preliminary results are then used in the sequel for the analysis of both two and three-dimensional elastic problems. In particular, the two-dimensional
problem is analysed in Section \ref{secmode3} (Mode III) and Section \ref{secmode12} (Mode I and II), while the general three-dimensional probelm is analysed in
Section \ref{sec3d}.

\section{Preliminary Results. The Betti formula and weight functions}
\label{sec2}

In this section, we extend the Betti formula to the case of general asymmetrical loading applied at the crack surfaces. The Betti formula is used in order to relate
the physical solution to the weight function, which is a special singular solution to the homogeneous problem (traction-free crack faces) (Willis and Movchan, 1995;
Piccolroaz et al., 2007). For the sake of generality, we allow for traction and displacement jumps across the interface, so that results are general and applicable also
to the case of imperfect interfaces.

In the absence of body forces, the Betti formula takes the form
\beq
\int_{\partial\Omega} \{ \bsigma^{(1)} \bn \cdot \bu^{(2)} - \bsigma^{(2)} \bn \cdot \bu^{(1)} \} ds = 0,
\eeq
where $\partial\Omega$ is any surface enclosing a region $\Omega$ within which both displacement fields $\bu^{(1)}$ and $\bu^{(2)}$ satisfy the
equations of equilibrium, with corresponding stress states $\bsigma^{(1)}$ and $\bsigma^{(2)}$, and $\bn$ denotes the outward normal to
$\partial\Omega$.

Applying the Betti formula to a hemispherical domain in the upper half-space $\Reals_+^3$, whose plane boundary is $x_2 = 0^+$ and whose radius
$R$ will be allowed to tend to infinity, we obtain, in the limit $R \to \infty$,
\beq
\label{betti}
\int\limits_{(x_2=0^+)}
\{ \bsigma_2^{(1)}(x_1,0^+,x_3) \cdot \bu^{(2)}(x_1,0^+,x_3) - \bsigma_2^{(2)}(x_1,0^+,x_3) \cdot \bu^{(1)}(x_1,0^+,x_3) \} dx_1 dx_3 = 0,
\eeq
provided that the fields $\bu^{(1)}$ and $\bu^{(2)}$ decay suitably fast at infinity. The notation $\bsigma_2$ is used to denote the traction
vector acting on the plane $x_2 = 0$: $\bsigma_2 = \bsigma \be_2$.

We can assume that $\bu^{(1)}$ represents the physical field associated with the crack loaded at its surface, whereas $\bu^{(2)}$ represents a
non-trivial solution of the homogeneous problem, the so-called weight function, defined as follows
\beq
\lb{trans}
\bu^{(2)}(x_1,x_2,x_3) = \bR \bU(-x_1,x_2,-x_3),
\eeq
where $\bR$ is a rotation matrix
\beq
\bR =
\left[
\barr{ccc}
-1 & 0 &  0 \\
 0 & 1 &  0 \\
 0 & 0 & -1
\earr
\right].
\eeq
Note that the transformation (\ref{trans}) corresponds to introducing a change of coordinates in the solution $\bu^{(2)}$, namely a
rotation about the $x_2$-axis through an angle $\pi$. It is straightforward to verify that the weight function $\bU$ satisfies the equations of
equilibrium, but in a different domain where the crack is placed along the semi-plane $x_2 = 0$, $x_1 >0$.
The notation $\bSigma$ will be used for components of stress corresponding to the displacement field $\bU$,
\beq
\bsigma^{(2)}(x_1,x_2,x_3) = \bR \bSigma(-x_1,x_2,-x_3) \bR.
\eeq
Replacing $\bu^{(2)}(x_1, x_2, x_3)$ with $\bu^{(2)}(x_1 - x_1', x_2, x_3 - x_3')$, which corresponds to a shift within the plane $(x_1, x_3),$
we obtain
\beq
\label{eq20}
\int\limits_{(x_2=0^+)}
\{ \bR \bU(x_1'-x_1,0^+,x_3'-x_3) \cdot \bsigma_2(x_1,0^+,x_3)
- \bR \bSigma_2(x_1'-x_1,0^+,x_3'-x_3) \cdot \bu(x_1,0^+,x_3) \} dx_1 dx_3 = 0.
\eeq
A similar equation can be derived by applying the Betti formula to a hemispherical domain in the lower half-space $\Reals_-^3$,
\beq
\label{eq21}
\int\limits_{(x_2=0^-)}
\{ \bR \bU(x_1'-x_1,0^-,x_3'-x_3) \cdot \bsigma_2(x_1,0^-,x_3)
- \bR \bSigma_2(x_1'-x_1,0^-,x_3'-x_3) \cdot \bu(x_1,0^-,x_3) \} dx_1 dx_3 = 0.
\eeq
Subtracting (\ref{eq21}) from (\ref{eq20}), we obtain
\beq
\label{recid}
\barr{r}
\ds
\int\limits_{(x_2=0)}
\left\{ \bR \jump{0.15}{\bU}(x_1'-x_1,x_3'-x_3) \cdot \langle \bsigma_2 \rangle(x_1,x_3) +
\bR \langle \bU \rangle(x_1'-x_1,x_3'-x_3) \cdot \jump{0.15}{\bsigma_2}(x_1,x_3) - \right. \\[3mm]
\ds
\left. - \bR \langle \bSigma_2 \rangle(x_1'-x_1,x_3'-x_3) \cdot \jump{0.15}{\bu}(x_1,x_3) \right\} dx_1 dx_3 = 0.
\earr
\eeq
Let us introduce the notations
\beq
\label{supers}
f^{(+)}(x_1,x_3) = f(x_1,x_3) H(x_1), \quad f^{(-)}(x_1,x_3) = f(x_1,x_3) H(-x_1),
\eeq
where $H$ denotes the Heaviside function, so that
$$
f(x_1,x_3) = f^{(+)}(x_1,x_3) + f^{(-)}(x_1,x_3).
$$
The reciprocity identity (\ref{recid}) becomes
\beq
\label{recid2}
\barr{r}
\ds
\int_{-\infty}^{\infty} \left\{
\bR \jump{0.15}{\bU}(x_1'-x_1,x_3'-x_3) \cdot \langle \bsigma_2 \rangle^{(+)}(x_1,x_3)
- \bR \langle \bSigma_2 \rangle(x_1'-x_1,x_3'-x_3) \cdot \jump{0.15}{\bu}^{(-)}(x_1,x_3)
\right\} dx_1 dx_3 = \\[5mm]
\ds
- \int_{-\infty}^{\infty} \left\{
\bR \jump{0.15}{\bU}(x_1'-x_1,x_3'-x_3) \cdot \langle \bsigma_2 \rangle^{(-)}(x_1,x_3)
+ \bR \langle \bU \rangle(x_1'-x_1,x_3'-x_3) \cdot \jump{0.15}{\bsigma_2}^{(-)}(x_1,x_3)
\right\} dx_1 dx_3 \\[5mm]
\ds
- \int_{-\infty}^{\infty} \left\{
\bR \langle \bU \rangle(x_1'-x_1,x_3'-x_3) \cdot \jump{0.15}{\bsigma_2}^{(+)}(x_1,x_3)
- \bR \langle \bSigma_2 \rangle(x_1'-x_1,x_3'-x_3) \cdot \jump{0.15}{\bu}^{(+)}(x_1,x_3)
\right\} dx_1 dx_3.
\earr
\eeq

The identity (\ref{recid2}) can be written in an equivalent form using the convolution with respect to both $x_1$ and $x_3$ variables, denoted by the symbol
$\circledast$,
\beq
\label{recid3}
\bR \jump{0.15}{\bU} \circledast \langle \bsigma_2 \rangle^{(+)} - \bR \langle \bSigma_2 \rangle \circledast \jump{0.15}{\bu}^{(-)} =
- \bR \jump{0.15}{\bU} \circledast \langle \bsigma_2 \rangle^{(-)} - \bR \langle \bU \rangle \circledast \jump{0.15}{\bsigma_2}^{(-)}
- \bR \langle \bU \rangle \circledast \jump{0.15}{\bsigma_2}^{(+)} + \bR \langle \bSigma_2 \rangle \circledast \jump{0.15}{\bu}^{(+)}.
\eeq

Note that the superscripts $^{(+)}$ and $^{(-)}$ denote functions whose support is restricted to the positive and negative semi-axes, respectively. Thus,
$\langle \bsigma_2 \rangle^{(+)}$ is the traction along the interface, ahead of the crack tip, whereas $\jump{0.15}{\bu}^{(-)}$ is the crack opening (displacement
discontinuity across the crack faces).

In the sequel, we will assume perfect contact conditions at the interface, so that traction and displacement are continuous for $x_1 > 0$:
$\jump{0.15}{\bsigma_2}^{(+)} = \jump{0.15}{\bu}^{(+)} = 0$. Finally, the symmetric and skew-symmetric loads applied on the crack faces will be denoted by
$\langle \bsigma_2 \rangle^{(-)} = \langle \bp \rangle$ and $\jump{0.15}{\bsigma_2}^{(-)} = \jump{0.15}{\bp}$, respectively.

The terms $\jump{0.15}{\bU}$ and $\langle \bU \rangle$ are known as the symmetric and skew-symmetric weight functions. The notion of skew-symmetric weight function
was introduced in Piccolroaz et al.\ (2009) for problems of cracks loaded by asymmetrical forces. The term $\langle \bSigma_2 \rangle$ stands for the traction along
the plane $x_2 = 0$, corresponding to the singular solution $\bU$.

\section{Mode III}
\label{secmode3}

In the case of antiplane deformation, the Betti formula (\ref{recid3}) relating the physical field $u_3$, $\sigma_{23}$ with
the weight function $U_3$, $\Sigma_{23}$ reduces to the scalar equation
\beq
\label{id1}
\jump{0.15}{U_3} * \langle \sigma_{23} \rangle^{(+)} - \langle \Sigma_{23} \rangle * \jump{0.15}{u_3}^{(-)} =
-\jump{0.15}{U_3} * \langle p_3 \rangle - \langle U_3 \rangle * \jump{0.15}{p_3},
\eeq
where the symbol $*$ denotes the convolution with respect to the variable $x_1$.

Let us introduce the Fourier transform with respect to the variable $x_1$ as follows
\beq
\tilde{f}(\beta) = \mF[f(x_1)] = \int_{-\infty}^{\infty} f(x_1) e^{i\beta x_1} dx_1.
\eeq
Applying the Fourier transform to the identity (\ref{id1}) we get
\beq
\label{id2}
\jump{0.15}{\tilde{U}_3} \langle \tilde{\sigma}_{23} \rangle^+ - \langle \tilde{\Sigma}_{23} \rangle \jump{0.15}{\tilde{u}_3}^- =
-\jump{0.15}{\tilde{U}_3} \langle \tilde{p}_3 \rangle - \langle \tilde{U}_3 \rangle \jump{0.15}{\tilde{p}_3},
\eeq
where the superscripts $^+$ and $^-$ denote functions analytic in the upper and lower half-planes, respectively.

We can now multiply both sides of (\ref{id2}) by $\jump{0.15}{\tilde{U}_3}^{-1}$ to obtain
\beq
\label{id2a}
\langle \tilde{\sigma}_{23} \rangle^+ - B \jump{0.15}{\tilde{u}_3}^- =
- \langle \tilde{p}_3 \rangle - A \jump{0.15}{\tilde{p}_3},
\eeq
where the factors in front of unknown functions are given by
\beq
A = \jump{0.15}{\tilde{U}_3}^{-1} \langle \tilde{U}_3 \rangle, \quad
B = \jump{0.15}{\tilde{U}_3}^{-1} \langle \tilde{\Sigma}_{23} \rangle.
\eeq
They can be computed from the relationships, which hold for the symmetric and skew-symmetric weight functions (Piccolroaz et al., 2009)
\beq
\jump{0.15}{\tilde{U}_3} = -\frac{b + e}{|\beta|} \langle \tilde{\Sigma}_{23} \rangle, \quad
\langle \tilde{U}_3 \rangle = \frac{\eta}{2} \jump{0.15}{\tilde{U}_3}.
\eeq
\noindent
Thus, we can easily obtain
\beq
A = \frac{\eta}{2}, \quad B = - \frac{|\beta|}{b + e},
\eeq
where $b$, $e$ and $\eta$ are the following bimaterial constants,
\beq
\label{parb}
b = \frac{1 - \nu_+}{\mu_+} + \frac{1 - \nu_-}{\mu_-}, \quad
e = \frac{\nu_+}{\mu_+} + \frac{\nu_-}{\mu_-}, \quad
\eta = \frac{\mu_- - \mu_+}{\mu_- + \mu_+}.
\eeq

If we apply the inverse Fourier transform to (\ref{id2a}), we obtain for the two opposite cases $x_1 < 0$ and $x_1 > 0$ the following relationships:
\beq
\label{id3}
\langle p_3 \rangle + \mF^{-1}_{x_1 < 0}\Big[A \jump{0.15}{\tilde{p}_3}\Big] =
\mF^{-1}_{x_1 < 0}\Big[B \jump{0.15}{\tilde{u}_3}^-\Big],
\eeq
\beq
\label{id4}
\langle \sigma_{23} \rangle^{(+)} = \mF^{-1}_{x_1 > 0}\Big[B \jump{0.15}{\tilde{u}_3}^-\Big],
\eeq
Note that the term $\langle \tilde{\sigma}_{23} \rangle^+$ in (\ref{id2}) cancels from (\ref{id3}) because it is a ``$+$'' function, while the terms
$\langle \tilde{p}_3 \rangle$ and $\jump{0.15}{\tilde{p}_3}$ cancel from (\ref{id4}) because they are ``$-$'' functions.

Finally, we find the inverse Fourier transform of the function $|\beta| \jump{0.15}{\tilde{u}_3}^-$. To this purpose we write
$|\beta| \jump{0.15}{\tilde{u}_3}^- = \sign(\beta) \cdot \beta\jump{0.15}{\tilde{u}_3}^-$ and observe that $\sign(\beta)$ is the Fourier transform of
$-i/(\pi x_1)$, whereas $\beta\jump{0.15}{\tilde{u}_3}^-$ is the Fourier transform of $i \partial \jump{0.15}{u_3}^- / \partial x_1$. Therefore we get
\beq
\label{not}
\mF^{-1}\Big[|\beta| \jump{0.15}{\tilde{u}_3}^-\Big] = \frac{1}{\pi x_1} * \frac{\partial \jump{0.15}{u_3}^-}{\partial x_1}.
\eeq
Note that, with slight abuse of notation, we write the convolution of two functions $f(x_1)$ and $g(x_1)$ in the form
\beq
\int_{-\infty}^{\infty} f(x_1 - \xi) g(\xi) d\xi = f(x_1) * g(x_1),
\eeq
so that the right-hand side in (\ref{not}) has to be interpreted as follows
\beq
\frac{1}{\pi x_1} * \frac{\partial \jump{0.15}{u_3}^-}{\partial x_1} =
\frac{1}{\pi} \int_{-\infty}^{\infty}\frac{1}{x_1 - \xi} \frac{\partial \jump{0.15}{u_3}^-}{\partial \xi} d\xi.
\eeq

\noindent
To simplify notations, we introduce the singular operator $\mS$ and the orthogonal projectors $\mP_\pm$ ($\mP_+ + \mP_- = \mI$) acting on the real axis:
\beq
\psi = \mS \varphi =
\frac{1}{\pi x_1} * \varphi(x_1) =
\frac{1}{\pi} \int_{-\infty}^{\infty}\frac{\varphi(\xi)}{x_1 - \xi}  d\xi,
\eeq
\beq
\mP_\pm \varphi =
\left\{
\barr{ll}
\varphi(x_1) & \pm x_1 \ge 0, \\[3mm]
0 & \text{otherwise}.
\earr
\right.
\eeq
Note that the operator $\mS$ is the classic singular operator of the Cauchy type; it transforms any function $\varphi$, satisfying the H\"{o}lder condition,
into a new function $\mS \varphi$ which also satisfies the H\"{o}lder condition (Muskhelishvili, 1946). The properties of the operator $\mS$ in other
functional spaces have been discussed in Pr\"{o}ssdorf (1974).

The integral identities (\ref{id3}) and (\ref{id4}) for Mode III deformation become
\beq
\label{res1}
\langle p_3 \rangle + \frac{\eta}{2} \jump{0.15}{p_3} =
-\frac{1}{b + e} \mS^{(s)} \frac{\partial \jump{0.15}{u_3}^{(-)}}{\partial x_1}, \quad x_1 < 0,
\eeq
\beq
\label{res2}
\langle \sigma_{23} \rangle^{(+)} =
-\frac{1}{b + e} \mS^{(c)} \frac{\partial \jump{0.15}{u_3}^{(-)}}{\partial x_1}, \quad x_1 > 0,
\eeq
where $\mS^{(s)} = \mP_- \mS \mP_-$ is a singular operator, whereas $\mS^{(c)} = \mP_+ \mS \mP_-$ is a compact operator (see Gakhov and Cherski 1978; Krein, 1958;
Gohberg and Krein, 1958).
Despite the fact that these operators look similar at a first glance, they are essentially different. Indeed, $\mS^{(s)}: F(\Reals_-) \to F(\Reals_-)$,
while $\mS^{(c)}: F(\Reals_-) \to F(\Reals_+)$, where $F(\Reals_\pm)$ is some functional space of functions defined on $\Reals_\pm$.

To make this point clear, let us write eqs. (\ref{res1}) and (\ref{res2}) in extended form as
\beq
\label{res1b}
\langle p_3 \rangle + \frac{\eta}{2} \jump{0.15}{p_3} =
-\frac{1}{\pi(b + e)} \int_{-\infty}^{0} \frac{1}{x_1 - \xi} \frac{\partial \jump{0.15}{u_3}^{(-)}}{\partial \xi} d\xi, \quad x_1 < 0,
\eeq
\beq
\label{res2b}
\langle \sigma_{23} \rangle^{(+)} =
-\frac{1}{\pi(b + e)} \int_{-\infty}^{0} \frac{1}{x_1 - \xi} \frac{\partial \jump{0.15}{u_3}^{(-)}}{\partial \xi} d\xi, \quad x_1 > 0.
\eeq
The integral in (\ref{res1b}) is a Cauchy singular integral with a moving singularity, whereas the integral in (\ref{res2b}) is an integral
with a fixed point singularity.

Note that, in the particular case of a homogeneous body, we have $\eta = 0$ and $b + e = 2/\mu$, so that the integral identities (\ref{res1}) and (\ref{res2})
reduce to
\beq
\label{mode_III}
\langle p_3 \rangle =
-\frac{\mu}{2} \mS^{(s)} \frac{\partial \jump{0.15}{u_3}^{(-)}}{\partial x_1}, \quad x_1 < 0,
\eeq
\beq
\langle \sigma_{23} \rangle^{(+)} =
-\frac{\mu}{2} \mS^{(c)} \frac{\partial \jump{0.15}{u_3}^{(-)}}{\partial x_1}, \quad x_1 > 0.
\eeq

The two equations (\ref{res1b}) and (\ref{res2b}) form the system of integral identities for the Mode III deformation. The first equation (\ref{res1b}) provides
the integral relationship between the applied loading $\langle p_3 \rangle$, $\jump{0.15}{p_3}$ and the resulting crack opening $\jump{0.15}{u_3}^{(-)}$. This is
a singular integral equation and it is, generally speaking, invertible. However, the inverse operator depends on the properties of the solution.
The second equation (\ref{res2b}) can be considered as an additional equation which allows to define the proper behaviour of the solution $\jump{0.15}{u_3}^{(-)}$
and also, after the first equation has been inverted, to evaluate the traction ahead of the crack tip $\langle \sigma_{23} \rangle^{(+)}$. The reason for this is that
the operator in the right-hand side of (\ref{res2b}) is a compact one and thus it is not invertible.

More details on the theory of singular integral equations can be found in Muskhelishvili (1946). The inversion of the singular operator $\mS^{(s)}$, in some
specific cases of interest for fracture mechanics and, in particular, for hydraulic fracturing, is discussed in Appendix \ref{appS}.

\section{Mode I and II}
\label{secmode12}

In the case of plane strain deformation, the Betti identity (\ref{recid3}) relating the physical solution $\bu = [u_1,u_2]^T$,
$\bsigma_2 = [\sigma_{21},\sigma_{22}]^T$ with the weight function $\bU$, $\bSigma_2$ is given by
\beq
\label{id1b}
\bR \jump{0.15}{\bU} * \langle \bsigma_2 \rangle^{(+)} - \bR \langle \bSigma_2 \rangle * \jump{0.15}{\bu}^{(-)} =
- \bR \jump{0.15}{\bU} * \langle \bp \rangle - \bR \langle \bU \rangle * \jump{0.15}{\bp}.
\eeq
where $\langle \bp \rangle = [\langle p_1 \rangle, \langle p_2 \rangle]^T$, $\jump{0.15}{\bp} = [\jump{0.15}{p_1}, \jump{0.15}{p_2}]^T$ are the symmetric and
skew-symmetric parts of the loading. Again we assume ideal contact conditions, so that traction and displacement are continuous across the interface:
$\jump{0.15}{\bsigma_{2}}^{(+)} = \jump{0.15}{\bu}^{(+)} = 0$. Here and in the sequel of this section, we use the following matrices:
\beq
\label{Rmatrix}
\bR =
\left[
\barr{cc}
-1 & 0 \\
0 & 1
\earr
\right], \quad
\bI =
\left[
\barr{cc}
1 & 0 \\
0 & 1
\earr
\right], \quad
\bE =
\left[
\barr{cc}
0 & 1 \\
-1 & 0
\earr
\right].
\eeq
Note that the symmetric and skew-symmetric weight functions, $\jump{0.15}{\bU}$ and $\langle \bU \rangle$, and the corresponding traction
$\langle \bSigma_2 \rangle$ are represented by 2$\times$2 matrices. In fact, in the case of an elastic bimaterial plane, there are two linearly independent
weight functions, $\bU^j = [U_1^j,U_2^j]^T$, $\bSigma_2^j = [\Sigma_{21}^j,\Sigma_{22}^j]^T$, $j = 1,2$, and it is possible to construct the weight function
tensors by ordering the components of each weight function in columns of 2$\times$2 matrices (Piccolroaz et al., 2009)
\beq
\bU =
\left[
\barr{cc}
U_1^1 & U_1^2 \\[1mm]
U_2^1 & U_2^2
\earr
\right], \quad
\bSigma_2 =
\left[
\barr{cc}
\Sigma_{21}^1 & \Sigma_{21}^2 \\[1mm]
\Sigma_{22}^1 & \Sigma_{22}^2
\earr
\right].
\eeq
Applying the Fourier transform to the equation (\ref{id1b}), we obtain
\beq
\label{id2b}
\jump{0.15}{\tilde{\bU}}^T \bR \langle \tilde{\bsigma}_2 \rangle^+ - \langle \tilde{\bSigma}_2 \rangle^T \bR \jump{0.15}{\tilde{\bu}}^- =
-\jump{0.15}{\tilde{\bU}}^T \bR \langle \tilde{\bp} \rangle - \langle \tilde{\bU} \rangle^T \bR \jump{0.15}{\tilde{\bp}}.
\eeq
Multiplying both sides by $\bR^{-1} \jump{0.15}{\tilde{\bU}}^{-T}$ we get
\beq
\label{id2ba}
\langle \tilde{\bsigma}_2 \rangle^+ - \bB \jump{0.15}{\tilde{\bu}}^- =
-\langle \tilde{\bp} \rangle - \bA \jump{0.15}{\tilde{\bp}},
\eeq
where $\bA$ and $\bB$ are the following matrices
\beq
\bA = \bR^{-1} \jump{0.15}{\tilde{\bU}}^{-T} \langle \tilde{\bU} \rangle^T \bR, \quad
\bB = \bR^{-1} \jump{0.15}{\tilde{\bU}}^{-T} \langle \tilde{\bSigma}_2 \rangle^T \bR,
\eeq
which can be computed using results for the symmetric and skew-symmetric weight functions obtained in Antipov (1999) and Piccolroaz et al. (2009). Namely
\beq
\jump{0.15}{\tilde{\bU}} = -\frac{1}{|\beta|} \big[b \bI - id\sign(\beta) \bE\big] \langle \tilde{\bSigma}_2 \rangle, \quad
\langle \tilde{\bU} \rangle = - \frac{b}{2|\beta|} \big[\alpha \bI - i\gamma\sign(\beta) \bE\big] \langle \tilde{\bSigma}_2 \rangle,
\eeq
where $b$ is defined in (\ref{parb}), $d/b$ and $\alpha$ are the so-called Dundurs parameters, while $\gamma$ is another bimaterial constant:
\beq
d = \frac{1 - 2\nu_+}{2\mu_+} - \frac{1 - 2\nu_-}{2\mu_-},
\eeq
\beq
\alpha = \frac{\mu_-(1 - \nu_+) - \mu_+(1 - \nu_-)}{\mu_-(1 - \nu_+) + \mu_+(1 - \nu_-)}, \quad
\gamma = \frac{\mu_-(1 - 2\nu_+) + \mu_+(1 - 2\nu_-)}{2\mu_-(1 - \nu_+) + 2\mu_+(1 - \nu_-)}.
\eeq
As a result, we obtain
\beq
\bA =
\frac{b}{2(b^2 - d^2)}
\big[
(b\alpha - d\gamma) \bI + i(d\alpha - b\gamma)\sign(\beta) \bE
\big],
\eeq
\beq
\bB =
-\frac{|\beta|}{b^2 - d^2}
\big[
b \bI + id\sign(\beta) \bE
\big].
\eeq
Inverting the Fourier transform for the two cases $x_1 < 0$ and $x_1 > 0$, we get
\beq
\label{id3b}
\langle \bp \rangle + \mF^{-1}_{x_1 < 0}\Big[\bA \jump{0.15}{\tilde{\bp}}\Big] =
\mF^{-1}_{x_1 < 0}\Big[\bB \jump{0.15}{\tilde{\bu}}^-\Big],
\eeq
\beq
\label{id4b}
\langle \bsigma_2 \rangle^{(+)} + \mF^{-1}_{x_1 > 0}\Big[\bA \jump{0.15}{\tilde{\bp}}\Big] =
\mF^{-1}_{x_1 > 0}\Big[\bB \jump{0.15}{\tilde{\bu}}^-\Big].
\eeq
Similarly to the previous section, the term $\langle \tilde{\bsigma} \rangle^+$ in (\ref{id2b}) cancels from (\ref{id3b}) because it is a ``$+$'' function,
whereas the term $\langle \tilde{\bp} \rangle$ cancels from (\ref{id4b}) because it is a ``$-$'' function.

To proceed further, we need to invert functions of the form $\sign(\beta) \jump{0.15}{\tilde{p}_i}$, $|\beta| \jump{0.15}{\tilde{u}_i}$ and
$\beta \jump{0.15}{\tilde{u}_i}$. This is done in a way similar to the previous section and we finally obtain the integral identities for plane strain
deformation as follows
\beq
\label{general_2D}
\langle \bp \rangle + \bmA^{(s)} \jump{0.15}{\bp} = \bmB^{(s)} \frac{\partial \jump{0.15}{\bu}^{(-)}}{\partial x_1}, \quad x_1 < 0,
\eeq
\beq
\label{general_2D2}
\langle \bsigma_2 \rangle^{(+)} + \bmA^{(c)} \jump{0.15}{\bp} = \bmB^{(c)} \frac{\partial \jump{0.15}{\bu}^{(-)}}{\partial x_1}, \quad x_1 > 0,
\eeq
where $\bmA^{(s)}$, $\bmB^{(s)}: F(\Reals_-) \to F(\Reals_-)$, and $\bmA^{(c)}, \bmB^{(c)}: F(\Reals_-) \to F(\Reals_+)$ are the following matrix operators
\beq
\label{op1}
\bmA^{(s)} =
\frac{b}{2(b^2 - d^2)}
\left[
(b\alpha - d\gamma) \bI + (d\alpha - b\gamma) \bE \mS^{(s)}
\right], \quad
\bmB^{(s)} =
-\frac{1}{b^2 - d^2}
\left[
b \bI \mS^{(s)} - d \bE
\right].
\eeq
\beq
\label{op2}
\bmA^{(c)} =
\frac{b(d\alpha - b\gamma)}{2(b^2 - d^2)} \bE \mS^{(c)}, \quad
\bmB^{(c)} =
-\frac{b}{b^2 - d^2} \bI \mS^{(c)}.
\eeq
Equations (\ref{general_2D}) and (\ref{general_2D2}), together with the definition of operators (\ref{op1}) and (\ref{op2}), form the system of
integral identities for Mode I/II deformation. In particular, (\ref{general_2D}) is a system of two singular integral equations and this system is coupled,
in general.

Note that, in the case where the Dundurs parameter $d$ vanishes, the operators can be written as
\beq
\label{homo_2D_d}
\bmA^{(s)} =
\frac{\alpha}{2} \bI -\frac{\gamma}{2} \bE \mS^{(s)}, \quad
\bmB^{(s)} = -\frac{1}{b} \bI \mS^{(s)}.
\eeq
\beq
\label{homo_2D_d2}
\bmA^{(c)} = -\frac{\gamma}{2} \bE \mS^{(c)}, \quad
\bmB^{(c)} = -\frac{1}{b} \bI \mS^{(c)}.
\eeq
This means that the system (\ref{general_2D}) decouples in the case $d = 0$ and reduces to
\beq
\label{homo_2Dd}
-\frac{1}{b} \mS^{(s)} \frac{\partial \jump{0.15}{\bu}^{(-)}}{\partial x_1} =
\langle \bp \rangle + \left(\frac{\alpha}{2}\bI - \frac{\gamma}{2} \bE \mS^{(s)}\right) \jump{0.15}{\bp}, \quad x_1 < 0.
\eeq
In the particular case of a homogeneous material, we additionally have $\alpha = 0$, $b = 2(1 - \nu)/\mu$ and $\gamma = (1 - 2\nu)/[2(1 - \nu)]$, so that the matrix
operators simplify further to the form:
\beq
\label{homo_2D}
\bmA^{(s,c)} = -\frac{(1 - 2\nu)}{4(1 - \nu)} \bE \mS^{(s,c)}, \quad
\bmB^{(s,c)} = -\frac{\mu}{2(1 - \nu)} \bI \mS^{(s,c)}.
\eeq
and the singular integral equation (\ref{homo_2Dd}) simplifies to
\beq
\label{homo_2Da}
-\frac{\mu}{2} \mS^{(s)} \frac{\partial \jump{0.15}{\bu}^{(-)}}{\partial x_1} =
(1 - \nu)\langle \bp \rangle - \frac{1 - 2\nu}{4} \bE \mS^{(s)} \jump{0.15}{\bp}, \quad x_1 < 0.
\eeq
In the case of symmetrical load ($\jump{0.15}{\bp} = 0$) this formula can be found for example in Rice (1968). It differs from the Mode III case 
(\ref{mode_III}) only for the coefficient in front of $\langle \bp \rangle$. If in addition $\nu = 0$, then the two formulas fully coincide, altough one is scalar and
the other is vector one.

The solution of either equation (\ref{homo_2Dd}) or (\ref{homo_2Da}), allowing for the calculation of the crack opening related to a given load applied on the crack
faces, requires the inversion of the operator $S^{(s)}$, which is discussed in details in Appendix \ref{appS}.
To invert the operator (\ref{general_2D}) in the general case, one needs to use results from Vekua (1970) where the systems of singular integral equations
are discussed and analysed thoroughly.

However, we propose here a simple way to decouple the system (\ref{general_2D}) by acting on it by the following operator:
\beq
\bmR = b \bI \mS^{(s)} + d\bE.
\eeq
As a result, equation (\ref{general_2D}) decouples as follows
\beq
\label{2D_decouple}
(1 - p\mK) \frac{\partial \jump{0.15}{\bu}^{(-)}}{\partial x_1} =
\frac{1}{2} \left\{ b\alpha \bI \mS^{(s)} + \bE \left[ b\gamma + p (d\alpha - b\gamma) \mK \right] \right\} \jump{0.15}{\bp}
+ \left( b \bI \mS^{(s)} + d\bE \right) \langle \bp \rangle,
\eeq
where we have introduced a new bimaterial constant ($0 < p \le 1$)
\beq
p = \frac{b^2}{b^2 - d^2}.
\eeq
In deriving (\ref{2D_decouple}) we used results from Muskhelishvili (1946) to prove that
\beq
\left(\mS^{(s)}\right)^2 = -\mI + \mK,
\eeq
where $\mI$ is the identity operator and $\mK: F(\Reals_-) \to F(\Reals_-)$ is a compact operator given by:
\beq
\mK\varphi = \frac{1}{\pi^2} \int_{-\infty}^{0} \frac{\log|\xi/x_1|}{x_1 - \xi} \varphi(\xi) d\xi.
\eeq
It is easy to see that in the case of a homogeneous body ($d = 0$, $\alpha = 0$ and thus $p = 1$) the equation (\ref{2D_decouple}) is equivalent to
(\ref{homo_2Da}). In the general case, the operator $1 - p\mK$ can be analytically inverted again using results from Muskhelishvili (1946). Alternatively, the
Fredholm equation (\ref{2D_decouple}) can be easily inverted by numerical techniques.

\section{General 3D case}
\label{sec3d}

For the case of an elastic bimaterial space, there are three linearly independent weight functions, from which we can construct the weight function tensors by
ordering the three components of each weight function in columns of 3$\times$3 matrices (Piccolroaz et al., 2007; Piccolroaz et al., 2011)
\beq
\bU =
\left[
\barr{ccc}
U_1^1 & U_1^2 & U_1^3 \\[1mm]
U_2^1 & U_2^2 & U_2^3 \\[1mm]
U_3^1 & U_3^2 & U_3^3 \\[1mm]
\earr
\right], \quad
\bSigma_2 =
\left[
\barr{ccc}
\Sigma_{21}^1 & \Sigma_{21}^2 & \Sigma_{21}^3 \\[1mm]
\Sigma_{22}^1 & \Sigma_{22}^2 & \Sigma_{22}^3 \\[1mm]
\Sigma_{23}^1 & \Sigma_{23}^2 & \Sigma_{23}^3 \\[1mm]
\earr
\right].
\eeq
Assuming again perfect interface (so that $\jump{0.15}{\bsigma_{2}}^{(+)} = \jump{0.15}{\bu}^{(+)} = 0$), the Betti identity, relating the physical solution
$\bu = [u_1,u_2,u_3]^T$, $\bsigma_2 = [\sigma_{21},\sigma_{22},,\sigma_{23}]^T$ with the weight function $\bU$, $\bSigma_2$, writes
\beq
\label{id1c}
\bR \jump{0.15}{\bU} \circledast \langle \bsigma_2 \rangle^{(+)} - \bR \langle \bSigma_2 \rangle \circledast \jump{0.15}{\bu}^{(-)} =
- \bR \jump{0.15}{\bU} \circledast \langle \bp \rangle - \bR \langle \bU \rangle \circledast \jump{0.15}{\bp},
\eeq
where the symbol $\circledast$ denotes convolution with respect to both variables $x_1$ and $x_3$ and $\bR$ is the matrix defined in Sec.~\ref{sec1}.

Let us introduce the Fourier transform with respect to the $x_1$ and $x_3$ coordinates as follows
\beq
\overline{f}(\beta,x_2,\lambda) = \int_{-\infty}^{\infty} \int_{-\infty}^{\infty} f(x_1,x_2,x_3) e^{i\beta x_1 + i\lambda x_3} dx_1 dx_3.
\eeq
Taking the Fourier transform of (\ref{id1c}) we obtain
\beq
\label{id2c}
\jump{0.15}{\overline{\bU}}^T \bR \langle \overline{\bsigma}_2 \rangle^+ - \langle \overline{\bSigma}_2 \rangle^T \bR \jump{0.15}{\overline{\bu}}^- =
-\jump{0.15}{\overline{\bU}}^T \bR \langle \overline{\bp} \rangle - \langle \overline{\bU} \rangle^T \bR \jump{0.15}{\overline{\bp}}.
\eeq
Multiplying both sides of (\ref{id2c}) by $\bR^{-1} \jump{0.15}{\overline{\bU}}^{-T}$ we get
\beq
\langle \overline{\bp} \rangle^- + \bA \jump{0.15}{\overline{\bp}}^- =
-\langle \overline{\bsigma}_2 \rangle^+ + \bB \jump{0.15}{\overline{\bu}}^-.
\eeq
where
\beq
\bA = \bR^{-1} \jump{0.15}{\overline{\bU}}^{-T} \langle \overline{\bU} \rangle^T \bR, \quad
\bB = \bR^{-1} \jump{0.15}{\overline{\bU}}^{-T} \langle \overline{\bSigma}_2 \rangle^T \bR.
\eeq
The two matrices $\bA$ and $\bB$ can be obtained from the following relationships between symmetric and skew-symmetric weight functions
(Antipov, 1999; Piccolroaz et al., 2011)
\beq
\label{syst1}
\jump{0.15}{\overline{\bU}} = \frac{1}{\rho} \bG \langle \overline{\bSigma}_2 \rangle, \quad
\langle \overline{\bU} \rangle = \frac{1}{\rho} \bF \langle \overline{\bSigma}_2 \rangle,
\eeq
where $\rho = \sqrt{\beta^2 + \lambda^2}$,
\beq
\bG(\beta,\lambda) = -\frac{1}{\rho^2} \left( b\rho^2\bI + e\bE_2 + id\rho\bE_3 \right), \quad
\bF(\beta,\lambda) = -\frac{1}{2\rho^2} \left(b\alpha\rho^2\bI + f\bE_2 + ib\gamma\rho\bE_3 \right).
\eeq
\beq
\bE_2(\beta,\lambda) =
\left[
\barr{ccc}
\lambda^2 & 0 & -\beta\lambda \\
0 & 0 & 0 \\
-\beta\lambda & 0 & \beta^2
\earr
\right],
\quad
\bE_3(\beta,\lambda) =
\left[
\barr{ccc}
0 & -\beta & 0 \\
\beta & 0 &\lambda \\
0 & -\lambda & 0
\earr
\right],
\eeq
$\bI$ is the 3$\times$3 identity matrix, $\bE_2\bE_3 = \b0$ and $e$, $f$ are the following bimaterial parameters:
\beq
e = \frac{\nu_+}{\mu_+} + \frac{\nu_-}{\mu_-}, \quad
f = \frac{\nu_+}{\mu_+} - \frac{\nu_-}{\mu_-}.
\eeq
As a result, we obtain
\beq
\barr{l}
\ds \bA(\beta,\lambda) = \frac{1}{2(b^2 - d^2)(b + e)\rho^2} \times \\[5mm]
\ds \Big\{ b(b\alpha - d\gamma)(b + e)\rho^2\bI + \big[ bd\gamma(b + e) - b\alpha(be + d^2) +
(b^2 - d^2)f \big]\bE_2 + ib(b\gamma - d\alpha)(b + e)\rho\bE_3 \Big\},
\earr
\eeq
\beq
\bB(\beta,\lambda) = \frac{1}{(b^2 - d^2)(b + e)\rho} \Big\{ -b(b + e)\rho^2\bI + (be + d^2)\bE_2 + id(b + e)\rho\bE_3 \Big\}.
\eeq
By inversion of the Fourier transform for the two cases $x_1 < 0$ and $x_1 > 0$, we obtain the two formulae
\beq
\langle \bp \rangle + \mF^{-1}_{x_1 < 0}\mF^{-1}_{x_3}\Big[\bA \jump{0.15}{\overline{\bp}}\Big] =
\mF^{-1}_{x_1 < 0}\mF^{-1}_{x_3}\Big[\bB \jump{0.15}{\overline{\bu}}^-\Big],
\eeq
\beq
\langle \bsigma_2 \rangle^{(+)} + \mF^{-1}_{x_1 > 0}\mF^{-1}_{x_3}\Big[\bA \jump{0.15}{\overline{\bp}}\Big] =
\mF^{-1}_{x_1 > 0}\mF^{-1}_{x_3}\Big[\bB \jump{0.15}{\overline{\bu}}^-\Big].
\eeq

We need now to perform the double Fourier inversion. The details are given in Appendix \ref{appdoub}. The final result is given in terms of the
following operators
\beq
\mQ \varphi =
\frac{1}{\pi \sqrt{x_1^2 + x_3^2}} \circledast \varphi(x_1,x_3) =
\frac{1}{\pi} \int_{-\infty}^{\infty} \int_{-\infty}^{\infty} \frac{\varphi(\xi_1,\xi_3)}{\sqrt{(x_1 - \xi_1)^2 + (x_3 - \xi_3)^2}} d\xi_1 d\xi_3,
\eeq
\beq
\mQ_j \varphi =
\frac{x_j}{\pi (x_1^2 + x_3^2)} \circledast \varphi(x_1,x_3) =
\frac{1}{\pi} \int_{-\infty}^{\infty} \int_{-\infty}^{\infty} \frac{(x_j - \xi_j) \varphi(\xi_1,\xi_3)}{(x_1 - \xi_1)^2 + (x_3 - \xi_3)^2} d\xi_1 d\xi_3, \quad
j = 1,3.
\eeq
\beq
\mQ^{(s,c)} = \mP_{(-,+)} \mQ \mP_-, \quad
\mQ_{j}^{(s,c)} = \mP_{(-,+)} \mQ_j \mP_-, \quad j=1,3.
\eeq
The integral identities for 3D elasticity are given by
\beq
\label{intid3d1}
\langle \bp \rangle + \bmA^{(s)} \jump{0.15}{\bp} = \bmB^{(s)} \jump{0.15}{\bu}^{(-)}, \quad x_1 < 0,
\eeq
\beq
\label{intid3d2}
\langle \bsigma_2 \rangle^{(+)} + \bmA^{(c)} \jump{0.15}{\bp} = \bmB^{(c)} \jump{0.15}{\bu}^{(-)}, \quad x_1 >0,
\eeq
where
\beq
\barr{l}
\ds \bmA^{(s)} = \frac{1}{2(b^2 - d^2)(b + e)} \times \\[3mm]
\ds
\left\{
b(b\alpha - d\gamma)(b + e) \bI + \frac{1}{2} \big[bd\gamma(b + e) - b\alpha(be + d^2) + (b^2 - d^2)f\big] \bmE_2^{(s)} \right. \\[3mm]
\ds \hspace{60mm} \left. -\frac{1}{2} b(b\gamma - d\alpha)(b + e) \mQ^{(s)}\bE_3\left(\frac{\partial}{\partial x_1},\frac{\partial}{\partial x_3}\right)
\right\},
\earr
\eeq
\vspace{5mm}
\beq
\barr{l}
\ds \bmB^{(s)} = \frac{1}{2(b^2 - d^2)(b + e)} \times \\[5mm]
\ds \left\{ b(b + e) \bI \mQ^{(s)} \left(\frac{\partial^2}{\partial x_1^2} + \frac{\partial^2}{\partial x_3^2}\right) -
(be + d^2) \mQ^{(s)} \bE_2\left(\frac{\partial}{\partial x_1},\frac{\partial}{\partial x_3}\right) -
2d(b + e) \bE_3\left(\frac{\partial}{\partial x_1},\frac{\partial}{\partial x_3}\right) \right\},
\earr
\eeq
\vspace{5mm}
\beq
\barr{l}
\ds \bmA^{(c)} = \frac{1}{2(b^2 - d^2)(b + e)} \times \\[3mm]
\ds
\left\{
\frac{1}{2} \big[bd\gamma(b + e) - b\alpha(be + d^2) + (b^2 - d^2)f\big] \bmE_2^{(c)} -
\frac{1}{2} b(b\gamma - d\alpha)(b + e) \mQ^{(c)}\bE_3\left(\frac{\partial}{\partial x_1},\frac{\partial}{\partial x_3}\right)
\right\},
\earr
\eeq
\vspace{5mm}
\beq
\bmB^{(c)} = \frac{1}{2(b^2 - d^2)(b + e)}
\ds \left\{ b(b + e) \bI \mQ^{(c)} \left(\frac{\partial^2}{\partial x_1^2} + \frac{\partial^2}{\partial x_3^2}\right) -
(be + d^2) \mQ^{(c)} \bE_2\left(\frac{\partial}{\partial x_1},\frac{\partial}{\partial x_3}\right) \right\},
\eeq
and the matrix operator $\bmE_2^{(s,c)}$ is defined as
\beq
\bmE_2^{(s,c)} =
\left[
\left[
\barr{c}
\mQ_3^{(s,c)} \\[3mm]
0 \\[3mm]
-\mQ_1^{(s,c)}
\earr
\right]
\otimes
\left[
\barr{c}
\ds \frac{\partial}{\partial x_3} \\[5mm]
0 \\[3mm]
\ds -\frac{\partial}{\partial x_1}
\earr
\right]
\right]^T =
\left[
\barr{ccc}
\ds \mQ_{3}^{(s,c)} \frac{\partial}{\partial x_3} & 0 & \ds -\mQ_{1}^{(s,c)} \frac{\partial}{\partial x_3} \\[5mm]
0 & 0 & 0 \\[3mm]
\ds -\mQ_{3}^{(s,c)} \frac{\partial}{\partial x_1} & 0 & \ds \mQ_{1}^{(s,c)} \frac{\partial}{\partial x_1}
\earr
\right].
\eeq
In the case of homogeneous body the matrix operators $\bmA^{(s)}$, $\bmB^{(s)}$, $\bmA^{(c)}$, $\bmB^{(c)}$ simplify to
\beq
\label{om1}
\bmA^{(s,c)} = -\frac{1 - 2\nu}{8(1 - \nu)} \mQ^{(s,c)}
\left[
\barr{ccc}
0 & \ds -\frac{\partial}{\partial x_1} & 0 \\[3mm]
\ds \frac{\partial}{\partial x_1} & 0 & \ds \frac{\partial}{\partial x_3} \\[3mm]
0 & \ds -\frac{\partial}{\partial x_3} & 0
\earr
\right],
\eeq
\beq
\label{om2}
\bmB^{(s,c)} = \frac{\mu}{4(1 - \nu)} \mQ^{(s,c)}
\left[
\barr{ccc}
\ds \frac{\partial^2}{\partial x_1^2} + (1 - \nu)\frac{\partial^2}{\partial x_3^2} & 0 & \ds \nu\frac{\partial^2}{\partial x_1 \partial x_3} \\[3mm]
0 & \ds \frac{\partial^2}{\partial x_1^2} + \frac{\partial^2}{\partial x_3^2} & 0 \\[3mm]
\ds \nu\frac{\partial^2}{\partial x_1 \partial x_3} & 0 & \ds (1 - \nu)\frac{\partial^2}{\partial x_1^2} + \frac{\partial^2}{\partial x_3^2}
\earr
\right].
\eeq

We note that the singular integral formulation (\ref{intid3d1}) for the case of a homogeneous body and symmetrical forces ($\jump{0.15}{\bp} = 0$) is equivalent
to the formulation given by Weaver (1977) and Budiansky and Rice (1979). To show this it is sufficient to use integration by parts, taking into account the behaviour 
of the displacements jump at infinity.

\section{Conclusions}
The singular integral formulation for a three-dimensional semi-infinite interfacial crack has been derived in explicit form by means of fundamental properties of 
linear elasticity and integral transforms. This formulation for interfacial cracks seems to be unknown in the literature. The method of derivation avoids the use of 
Green's function and the tedious limiting process involved in the general procedure. The identities are important in applications especially in case of multiphysics 
problems where the elasticity equations are coupled with other concurrent phenomena, such as in hydraulic fracturing.

\vspace{10mm}
{\bf Acknowledgement.} A.P. and G.M. gratefully acknowledge the support from the European Union Seventh Framework Programme
under contract numbers PIEF-GA-2009-252857 and PIAP-GA-2009-251475, respectively.

\appendix
\renewcommand{\theequation}{\thesection.\arabic{equation}}

\section{Inversion of the 2D singular operator $S^{(s)}$}
\setcounter{equation}{0}
\label{appS}

The inversion formula for the singular integral equation
\beq
\label{int1}
\psi(x_1) = S^{(s)}\varphi(x_1) =
\frac{1}{\pi} \int_{-\infty}^{0} \frac{\varphi(\xi)}{x_1 - \xi} d\xi, \quad x_1 < 0,
\eeq
strongly depends on the properties of the known function $\psi(x_1)$. Indeed, let us first assume that the function $\psi(x_1)$ has compact support
$-a \le x_1 \le -b$, where $a$ and $b$ are positive constants, and belongs to a H\"older class, as it usually takes place in the classical 2D elasticity
(Muskhelishvili, 1946). Then the inversion formula can be found in Muskhelishvili (1946) (see also Rice, 1968)
\beq
\label{rice}
\varphi(x_1) = \left(S^{(s)}\right)_1^{-1}\psi(x_1) = -\frac{1}{\pi} \int_{-\infty}^{0} \sqrt{\frac{\xi}{x_1}} \frac{\psi(\xi)}{x_1 - \xi} d\xi, \quad x_1 < 0.
\eeq
Note that under such assumptions
\beq
\label{rice_0}
\varphi(x_1) \sim \frac{K_0}{\sqrt{-x_1}}, \quad x_1 \to 0^-, \quad
\varphi(x_1) \sim \frac{K_\infty}{(-x_1)^{3/2}} , \quad x_1 \to -\infty.
\eeq
Here the constants $K_0$ and $K_\infty$ ($K_0$ is the so-called stress intensity factor) are determined by the formulae
\beq
\label{rice_1}
K_0 = -\frac{1}{\pi} \int_{-\infty}^{0} \frac{\psi(\xi)}{\sqrt{-\xi}} d\xi, \quad
K_\infty = \frac{1}{\pi} \int_{-\infty}^{0} \psi(\xi)\sqrt{-\xi} d\xi.
\eeq

Let us now consider the case where the function $\psi(x_1)$ extends over the whole negative semi-axis, having the following behaviour at zero and infinity:
\beq
\label{inf}
\psi(x_1) \sim \frac{\psi_0}{(-x_1)^{\alpha_0}}, \quad x_1 \to 0^-, \quad
\psi(x_1) \sim \frac{\psi_\infty}{(-x_1)^{\alpha_\infty}}, \quad x_1 \to -\infty.
\eeq

If we assume that $\alpha_0<1/2$ and $\alpha_\infty>3/2$ then the inversion formula (\ref{rice}) is still valid and leads to the asymptotics (\ref{rice_0}).
In the case $\alpha_0<1/2$ and $1/2<\alpha_\infty<3/2$ the inversion formula (\ref{rice}) remains true, the behaviour of the function $\psi(x_1)$ near zero is the
same as in (\ref{rice_0})$_1$, while the behaviour at infinity (\ref{rice_0})$_2$ changes to:
\beq
\label{inf_1}
\varphi(x_1) \sim \frac{-\pi\psi_\infty\sin(\pi\alpha_\infty)}{(-x_1)^{\alpha_\infty}}, \quad x_1 \to -\infty.
\eeq

However, there are situations (such as some problems arising in hydraulic fracturing, see Garagash and Detournay, 2000) where the behaviour of the function $\psi(x_1)$
at infinity is worst, so that $\alpha_\infty$ in (\ref{inf}) is in the range $0 < \alpha_\infty < 1/2$. In such cases, the classic inversion formula (\ref{rice}) is not
any longer applicable and another inversion formula should be used instead. In Garagash and Detournay (2000) the formula
\beq
\label{garag}
\varphi(x_1) = \left(S^{(s)}\right)_2^{-1}\psi(x_1) =
-\frac{C_0}{(-x_1)^{1/2}} - \frac{1}{\pi} \int_{-\infty}^{0} \sqrt{\frac{x_1}{\xi}} \frac{\psi(\xi)}{x_1 - \xi} d\xi, \quad x_1 < 0,
\eeq
was derived by easy and elegant arguments leaving, however, the constant $C_0$ unknown.

Here we present an inversion formula accurately derived under slightly more restricted conditions on the behaviour of the function $\psi(x_1)$ at infinity. Namely we
assume that
\beq
\psi(x_1) = \psi_\infty(x_1) + O\left[\frac{1}{(-x_1)^{\beta}}\right], \quad x_1 \to -\infty,
\eeq
where
\beq
\psi_\infty(x_1) = \sum_{j = 1}^N \frac{\psi^{(j)}_\infty}{(-x_1)^{\alpha^{(j)}_\infty}},
\eeq
and $0 < \alpha^{(j)}_\infty < 1/2$ and $\beta > 1/2$. The alternative formula reads
\beq
\varphi(x_1) = \left(S^{(s)}\right)_2^{-1}\psi(x_1) =
\sum_{j = 1}^N \tan(\pi\alpha_\infty^{(j)}) \frac{\psi^{(j)}_\infty}{(-x_1)^{\alpha^{(j)}_\infty}} -
\frac{1}{\pi} \int_{-\infty}^{0} \sqrt{\frac{\xi}{x_1}} \frac{\psi(\xi) - \psi_\infty(\xi)}{x_1 - \xi} d\xi, \quad x_1 < 0.
\eeq
Note that this formula can be rewritten in the equivalent form (\ref{garag}), where the constant $C_0$ can be determined by
\beq
C_0 = \frac{1}{\pi} \int_{-\infty}^{0} \frac{\psi(\xi) - \psi_\infty(\xi)}{(-\xi)^{1/2}} d\xi.
\eeq

\section{Reduction of the identities from 3D to 2D case}
\setcounter{equation}{0}
\label{appredu}

The integral identities for 3D elasticity can be reduced to the 2D case by assuming that all the mechanical fields are independent of the $x_3$ coordinate.
We start by showing that if the fields $\bu$, $\bsigma_2$, $\bp$ are independent of $x_3$ then the convolution with respect to $x_1$ and $x_3$, denoted by
$\circledast$, reduces to the convolution with respect to the $x_1$ coordinate only, denoted by $*$.

In particular, we have the following formulae (proved in Appendix \ref{appid}, Theorem \ref{th1}). If $\varphi$ is a function of $x_1$ only, then
\beq
\label{form1}
\frac{1}{\sqrt{x_1^2 + x_3^2}} \circledast \frac{\partial^2 \varphi}{\partial x_1^2} =
-2 \frac{1}{x_1} * \frac{\partial \varphi}{\partial x_1}
\quad \rightsquigarrow \quad
\mQ \frac{\partial^2 \varphi}{\partial x_1^2} = -2 \mS \frac{\partial \varphi}{\partial x_1}
\eeq
\beq
\label{form2}
\frac{1}{\sqrt{x_1^2 + x_3^2}} \circledast \frac{\partial \varphi}{\partial x_1} =
-2 \frac{1}{x_1} * \varphi \quad \rightsquigarrow \quad
\mQ \frac{\partial \varphi}{\partial x_1} = -2 \mS \varphi.
\eeq
We also make use of the following identities (proved in Appendix \ref{appid}, Theorems \ref{th2} and \ref{th3}):
\beq
\label{iden1}
\frac{x_1}{x_1^2 + x_3^2} \circledast \frac{\partial \varphi}{\partial x_1} + \frac{x_3}{x_1^2 + x_3^2} \circledast \frac{\partial \varphi}{\partial x_3} =
2\pi \varphi
\quad \rightsquigarrow \quad
\mQ_1 \frac{\partial \varphi}{\partial x_1} + \mQ_3 \frac{\partial \varphi}{\partial x_3} = 2 \varphi,
\eeq
\beq
\label{iden2}
\frac{x_3}{x_1^2 + x_3^2} \circledast \frac{\partial \varphi}{\partial x_1} =
\frac{x_1}{x_1^2 + x_3^2} \circledast \frac{\partial \varphi}{\partial x_3}
\quad \rightsquigarrow \quad
\mQ_3 \frac{\partial \varphi}{\partial x_1} =
\mQ_1 \frac{\partial \varphi}{\partial x_3}.
\eeq
If $\varphi$ is a function of $x_1$ only then the identities (\ref{iden1}) and (\ref{iden2}) yield
\beq
\label{iden1b}
\frac{x_1}{x_1^2 + x_3^2} \circledast \frac{\partial \varphi}{\partial x_1} =
2\pi \varphi
\quad \rightsquigarrow \quad
\mQ_1 \frac{\partial \varphi}{\partial x_1} = 2 \varphi,
\eeq
\beq
\label{iden2b}
\frac{x_3}{x_1^2 + x_3^2} \circledast \frac{\partial \varphi}{\partial x_1} = 0
\quad \rightsquigarrow \quad
\mQ_3 \frac{\partial \varphi}{\partial x_1} = 0.
\eeq

Using the formulae (\ref{form1}), (\ref{form2}), (\ref{iden1b}) and (\ref{iden2b}), the integral identity (\ref{intid3d1}) for 3D elasticity reduces to
\beq
\langle \bp \rangle +
\frac{b}{2 (b^2 - d^2)}
\left[
\barr{ccc}
\ds (b\alpha - d\gamma) & \ds (d\alpha - b\gamma) \mS^{(s)} & 0 \\[3mm]
\ds -(d\alpha - b\gamma) \mS^{(s)} & \ds (b\alpha - d\gamma) & 0 \\[3mm]
0 & 0 & 0
\earr
\right]
\jump{0.15}{\bp} +
\frac{\eta}{2}
\left[
\barr{ccc}
0 & 0 & 0 \\[3mm]
0 & 0 & 0 \\[3mm]
0 & 0 & 1
\earr
\right]
\jump{0.15}{\bp} =
\eeq
\beq
-\frac{1}{b^2 - d^2}
\left[
\barr{ccc}
\ds b \mS^{(s)} \frac{\partial}{\partial x_1} & \ds -d \frac{\partial}{\partial x_1} & 0 \\[3mm]
\ds d \frac{\partial}{\partial x_1} & \ds \ds b \mS^{(s)} \frac{\partial}{\partial x_1} & 0 \\[3mm]
0 & 0 & 0
\earr
\right] \jump{0.15}{\bu}^{(-)}
-\frac{1}{b + e}
\left[
\barr{ccc}
0 & 0 & 0 \\[3mm]
0 & 0 & 0 \\[3mm]
0 & 0 & \ds \mS^{(s)} \frac{\partial}{\partial x_1}
\earr
\right] \jump{0.15}{\bu}^{(-)}.
\eeq
This formula is fully consistent with the formulation for 2D elasticity given in Secs. \ref{secmode3} and \ref{secmode12}. In a similar way, also the formula
(\ref{intid3d2}) can be shown to reduce to the 2D formulation.

\section{Illustrative example in 2D case: point forces applied at the crack faces}
\setcounter{equation}{0}
\label{appex}

In this section, we illustrate the use of the singular integral identities in the 2D case, when the loading is given in terms of point forces applied at the crack
faces. We analyse first the antiplane problem and then the vector problem.

{\bf Mode III. Symmetrical point forces.} Assume that the loading is given as two symmetrical point forces applied on the crack faces, at a distance $a$ from the
crack tip, and directed along the $x_3$-axis,
\beq
\langle p_3 \rangle(x_1) = -F \delta(x_1 + a), \quad \jump{0.15}{p_3}(x_1) = 0,
\eeq
where $\delta$ is the Dirac delta function.

The singular integral equation relating the applied loading with the resulting crack opening is given by (\ref{res1}) and, by means of the inversion formula (\ref{rice})
we obtain
\beq
\label{der}
\frac{\partial \jump{0.15}{u_3}^{(-)}}{\partial x_1} = -\frac{b + e}{\pi} F \int_{-\infty}^0 \sqrt{\frac{\xi}{x_1}} \frac{\delta(\xi + a)}{x_1 - \xi} d\xi =
- \frac{b + e}{\pi} F \sqrt{-\frac{a}{x_1}} \frac{1}{x_1 + a}.
\eeq
By integration of (\ref{der}) and using the conditions that the crack opening vanishes at zero and at infinity, we get
\beq
\jump{0.15}{u_3}(x_1) =
\left\{
\barr{l}
\ds \frac{2F}{\pi}(b + e) \arctanh \sqrt{-\frac{x_1}{a}}, \quad -a < x_1 < 0, \\[5mm]
\ds \frac{2F}{\pi}(b + e) \arctanh \sqrt{-\frac{a}{x_1}}, \quad x_1 < -a.
\earr
\right.
\eeq
From (\ref{res2}) we can now obtain the corresponding tractions ahead of the crack tip, namely
\beq
\langle \sigma_{23} \rangle^{(+)}(x_1) = - \frac{1}{\pi(b + e)} \int_{-\infty}^0 \frac{1}{x_1 - \xi} \frac{\partial \jump{0.15}{u_3}^{(-)}}{\partial \xi} d\xi =
\frac{F}{\pi} \sqrt{\frac{a}{x_1}} \frac{1}{x_1 + a}.
\eeq
Note that the stress intensity factor in this case is
\beq
K_\modIII = \lim_{x_1 \to 0} \sqrt{2\pi x_1} \langle \sigma_{23} \rangle^{(+)}(x_1) = \sqrt{\frac{2}{\pi a}} F.
\eeq

{\bf Mode III. Skew-symmetrical point forces.} We assume now that the loading is given by two skew-symmetrical point forces again applied at a distance $a$ behind
crack tip and directed along the $x_3$-axis,
\beq
\langle p_3 \rangle(x_1) = 0, \quad \jump{0.15}{p_3}(x_1) = -2F \delta(x_1 + a).
\eeq
It is easy to show that all the formulae derived above still apply, however, they contain now the factor $\eta$, so that
\beq
\frac{\partial \jump{0.15}{u_3}^{(-)}}{\partial x_1} = - \eta \frac{b + e}{\pi} F \sqrt{-\frac{a}{x_1}} \frac{1}{x_1 + a},
\eeq
\beq
\jump{0.15}{u_3}(x_1) =
\left\{
\barr{l}
\ds \eta \frac{2F}{\pi}(b + e) \arctanh \sqrt{-\frac{x_1}{a}}, \quad -a < x_1 < 0, \\[5mm]
\ds \eta \frac{2F}{\pi}(b + e) \arctanh \sqrt{-\frac{a}{x_1}}, \quad x_1 < -a,
\earr
\right.
\eeq
\beq
\langle \sigma_{23} \rangle^{(+)}(x_1) = \eta \frac{F}{\pi} \sqrt{\frac{a}{x_1}} \frac{1}{x_1 + a},
\eeq
and the stress intensity factor is given by
\beq
K_\modIII = \eta \sqrt{\frac{2}{\pi a}} F.
\eeq

{\bf Mode I and II. Symmetrical point forces.} We consider now the vector 2D case and suppose that the loading is given as symmetrical point forces in both $x_1$ and
$x_2$ directions, namely
\beq
\langle p_1 \rangle(x_1) = -F_1 \delta(x_1 + a), \quad \jump{0.15}{p_1}(x_1) = 0,
\eeq
\beq
\langle p_2 \rangle(x_1) = -F_2 \delta(x_1 + a), \quad \jump{0.15}{p_2}(x_1) = 0.
\eeq
For simplicity we assume that the Dundurs parameter $d$ is equal to zero, so that the system of singular equations (\ref{general_2D}) decouples
\beq
-\frac{1}{b} \mS^{(s)} \frac{\partial \jump{0.15}{u_j}^{(-)}}{\partial x_1} = \langle p_j \rangle(x_1), \quad j = 1,2.
\eeq
and it is possible to obtain results in closed form. Indeed, the inversion formula (\ref{rice}) leads to
\beq
\frac{\partial \jump{0.15}{u_j}^{(-)}}{\partial x_1} = - \frac{b F_j}{\pi} \sqrt{-\frac{a}{x_1}} \frac{1}{x_1 + a},
\eeq
which, after integration, gives
\beq
\jump{0.15}{u_j}(x_1) =
\left\{
\barr{l}
\ds \frac{2b F_j}{\pi} \arctanh \sqrt{-\frac{x_1}{a}}, \quad -a < x_1 < 0, \\[5mm]
\ds \frac{2b F_j}{\pi} \arctanh \sqrt{-\frac{a}{x_1}}, \quad x_1 < -a.
\earr
\right.
\eeq
Finally, the tractions ahead of the crack tip are calculated from (\ref{general_2D2})
\beq
\langle \sigma_{2j} \rangle^{(+)}(x_1) = \frac{F_j}{\pi} \sqrt{\frac{a}{x_1}} \frac{1}{x_1 + a}.
\eeq
The stress intensity factors are then obtained as
\beq
K_\modI = \sqrt{\frac{2}{\pi a}} F_2, \quad K_\modII = \sqrt{\frac{2}{\pi a}} F_1.
\eeq

{\bf Mode I and II. Skew-symmetrical point forces.} In this case the point forces, applied at a distance $a$ behind the crack tip, have the same direction
\beq
\langle p_1 \rangle(x_1) = 0, \quad \jump{0.15}{p_1}(x_1) = -2F_1 \delta(x_1 + a),
\eeq
\beq
\langle p_2 \rangle(x_1) = 0, \quad \jump{0.15}{p_2}(x_1) = -2F_2 \delta(x_1 + a).
\eeq
Assuming again $d = 0$, the solution of the system of singular integral equations (\ref{general_2D}) leads to
\beq
\frac{\partial \jump{0.15}{u_1}^{(-)}}{\partial x_1} = - \frac{b\alpha F_1}{\pi} \sqrt{-\frac{a}{x_1}} \frac{1}{x_1 + a} - b\gamma F_2 \delta(x_1 + a),
\eeq
\beq
\frac{\partial \jump{0.15}{u_2}^{(-)}}{\partial x_1} = - \frac{b\alpha F_2}{\pi} \sqrt{-\frac{a}{x_1}} \frac{1}{x_1 + a} + b\gamma F_1 \delta(x_1 + a),
\eeq
which after integration gives
\beq
\label{open}
\barr{l}
\ds
\jump{0.15}{u_1}(x_1) =
\left\{
\barr{ll}
\ds \frac{2b\alpha F_1}{\pi} \arctanh \sqrt{-\frac{x_1}{a}}, & -a < x_1 < 0, \\[5mm]
\ds \frac{2b\alpha F_1}{\pi} \arctanh \sqrt{-\frac{a}{x_1}} + b\gamma F_2, & x_1 < -a,
\earr
\right.
\\[12mm]
\ds
\jump{0.15}{u_2}(x_1) =
\left\{
\barr{ll}
\ds \frac{2b\alpha F_2}{\pi} \arctanh \sqrt{-\frac{x_1}{a}}, & -a < x_1 < 0, \\[5mm]
\ds \frac{2b\alpha F_2}{\pi} \arctanh \sqrt{-\frac{a}{x_1}} - b\gamma F_1, & x_1 < -a.
\earr
\right.
\earr
\eeq
The tractions ahead of the crack tip are obtained from (\ref{general_2D2}) and read
\beq
\langle \sigma_{21} \rangle^{(+)}(x_1) =
-\frac{1}{b} \mS^{(c)} \frac{\partial \jump{0.15}{u_1}^{(-)}}{\partial x_1} + \frac{\gamma}{2} \mS^{(c)} \jump{0.15}{p_2}(x_1) =
\alpha \frac{F_1}{\pi} \sqrt{\frac{a}{x_1}} \frac{1}{x_1 + a},
\eeq
\beq
\langle \sigma_{22} \rangle^{(+)}(x_1) =
-\frac{1}{b} \mS^{(c)} \frac{\partial \jump{0.15}{u_2}^{(-)}}{\partial x_1} - \frac{\gamma}{2} \mS^{(c)} \jump{0.15}{p_1}(x_1) =
\alpha \frac{F_2}{\pi} \sqrt{\frac{a}{x_1}} \frac{1}{x_1 + a}.
\eeq
The stress intensity factors are then obtained as
\beq
K_\modI = \alpha \sqrt{\frac{2}{\pi a}} F_2, \quad K_\modII = \alpha \sqrt{\frac{2}{\pi a}} F_1.
\eeq
The reader may be surprised by the coupling constant terms in the right-hand side of expressions (\ref{open}), showing a discontinuity of the crack opening
at the point of application of the forces. However, these terms can be easily explained by making recourse to the Flamant solution for a half-plane loaded by 
concentrated forces $F_1$ and $F_2$ at its surface (Barber, 2002). 

Assuming a reference system centred at the point of application of the forces, the surface displacement are determined, up to
an arbitrary constant, by 
\beq
\label{flam}
\barr{l}
\ds u_1(x_1) = -\frac{F_1 (1 - \nu) \log|x_1|}{\pi \mu} + \frac{F_2(1 - 2\nu) \sign(x_1)}{4\mu}, \\[6mm]
\ds u_2(x_1) = -\frac{F_1(1 - 2\nu) \sign(x_1)}{4\mu} - \frac{F_1 (1 - \nu) \log|x_1|}{\pi \mu},
\earr
\eeq
where $\mu$ and $\nu$ are the shear modulus and the Poisson's ratio of the half-plane, respectively.

We now apply this solution to the two half-planes making up the cracked bimaterial
plane. Of course, we do not expect this solution to match exactly with the correct solution obtained above, since the boundary conditions for the crack problem differ
from the Flamant problem along the interface joining the two materials. Nonetheless, let us for the time being make this analogy and calculate from the solution
(\ref{flam}) what the crack opening would be.

We consider only the coupling (diagonal) terms in the system (\ref{flam}), which can explain the observed discontinuity in the crack opening. We also adjust the arbitrary
constant in the Flamant solution in order to have the same displacement for the two crack faces at the crack tip.
As a result, we obtain
\beq
\jump{0.15}{u_1}(x_1) = F_2 \sign(-x_1 - a) \left( \frac{1 - 2\nu_+}{2\mu_+} + \frac{1 - 2\nu_-}{2\mu_-} \right) = b\gamma F_2 \sign(-x_1 - a),
\eeq
\beq
\jump{0.15}{u_1}(x_1) = -F_1 \sign(-x_1 - a) \left( \frac{1 - 2\nu_+}{2\mu_+} + \frac{1 - 2\nu_-}{2\mu_-} \right) = -b\gamma F_1 \sign(-x_1 - a),
\eeq
which is in perfect agreement with our solution (\ref{open}).

\section{Identities involving convolution integrals}
\setcounter{equation}{0}
\label{appid}

\begin{theorem}
\label{th1}
Let $\varphi(x_1)$ be a function defined on $\Reals$ and possess the derivatives of order $1,...,n$. Assume also that the function $\varphi(x_1)$ is
vanishing with all its derivatives as $x_1 \to \pm \infty$. Then
\beq
\int_{-\infty}^{\infty} \int_{-\infty}^{\infty} \frac{1}{\sqrt{(x_1 - \xi_1)^2 + (x_3 - \xi_3)^2}} \frac{\partial^n \varphi}{\partial \xi_1^n} d\xi_1 d\xi_3 =
-2 \int_{-\infty}^{\infty} \frac{1}{x_1 - \xi_1} \frac{\partial^{n-1} \varphi}{\partial x_1^{n-1}} d\xi_1.
\eeq
\end{theorem}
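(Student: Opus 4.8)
The plan is to reduce the general statement to the single‑derivative case and then establish that case by one integration by parts followed by an elementary one‑dimensional integral.

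First, it suffices to prove the identity for $n=1$. If $\varphi$ possesses derivatives of orders $1,\dots,n$, all vanishing at $\pm\infty$, then $\psi := \partial^{\,n-1}\varphi/\partial x_1^{\,n-1}$ satisfies the hypotheses of the $n=1$ case, and the $n=1$ identity applied to $\psi$,
\[
\int_{-\infty}^{\infty}\!\int_{-\infty}^{\infty} \frac{1}{\sqrt{(x_1-\xi_1)^2+(x_3-\xi_3)^2}}\,\frac{\partial \psi}{\partial \xi_1}\,d\xi_1\,d\xi_3 = -2\int_{-\infty}^{\infty} \frac{\psi(\xi_1)}{x_1-\xi_1}\,d\xi_1 ,
\]
is exactly the assertion for order $n$. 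Moreover the left-hand side is independent of $x_3$: the substitution $\xi_3 \mapsto \xi_3 + x_3$ removes $x_3$ from the kernel, so we may set $x_3=0$ throughout.

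For $n=1$, fix $\xi_3 \neq 0$ and integrate by parts in $\xi_1$. Since $\varphi$ vanishes at $\pm\infty$ and the kernel is bounded in $\xi_1$ for $\xi_3\neq0$, the boundary terms drop, and using $\partial_{\xi_1}\big[(x_1-\xi_1)^2+\xi_3^2\big]^{-1/2} = (x_1-\xi_1)\big[(x_1-\xi_1)^2+\xi_3^2\big]^{-3/2}$ one gets
\[
\int_{-\infty}^{\infty} \frac{1}{\sqrt{(x_1-\xi_1)^2+\xi_3^2}}\,\frac{\partial\varphi}{\partial\xi_1}\,d\xi_1 = \int_{-\infty}^{\infty} \frac{(\xi_1-x_1)\,\varphi(\xi_1)}{\big[(x_1-\xi_1)^2+\xi_3^2\big]^{3/2}}\,d\xi_1 .
\]
Integrating over $\xi_3\in\Reals$ and interchanging the order of integration, the inner $\xi_3$-integral is elementary, $\int_{-\infty}^{\infty}\big[(x_1-\xi_1)^2+\xi_3^2\big]^{-3/2}\,d\xi_3 = 2/(x_1-\xi_1)^2$, leaving $2\int_{-\infty}^{\infty}(\xi_1-x_1)^{-1}\varphi(\xi_1)\,d\xi_1 = -2\int_{-\infty}^{\infty}(x_1-\xi_1)^{-1}\varphi(\xi_1)\,d\xi_1$, the right-hand side for $n=1$, understood as a Cauchy principal value. (Alternatively one could argue by Fourier transform, using that $1/\sqrt{x_1^2+x_3^2}$ has planar transform $2\pi/\sqrt{\beta^2+\lambda^2}$ and that $\varphi$ depends on $x_1$ only; the direct computation seems cleaner.)

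The delicate point — the main obstacle — is justifying the interchange of integration, since the double integral is not absolutely convergent near $(\xi_1,\xi_3)=(x_1,0)$: in polar coordinates $(r,\theta)$ centred there the integrand is $O(r^{-1})$ once the area element is included. I would handle this by excising the disk $\{r<\delta\}$: on its complement the double integral converges absolutely (by the decay of $\varphi$), so Fubini applies and the $\xi_3$-integration proceeds as above, producing in the limit $\delta\to0$ the principal-value integral on the $\xi_1$-axis. On the disk, write $\varphi(\xi_1)=\varphi(x_1)+[\varphi(\xi_1)-\varphi(x_1)]$; the constant part contributes zero by angular symmetry, its integrand being odd in $\xi_1-x_1$ (proportional to $\cos\theta$, with $\int_0^{2\pi}\cos\theta\,d\theta=0$), while the remainder satisfies $|\varphi(\xi_1)-\varphi(x_1)|\le C\,|\xi_1-x_1|\le C\,r$, so its integrand is $O(1)$ against $r\,dr\,d\theta$ and vanishes as $\delta\to0$. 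This settles $n=1$, and by the reduction above the theorem follows for every $n$.
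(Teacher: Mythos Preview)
Your proof is correct and follows essentially the same route as the paper: integrate by parts once in $\xi_1$ to pass from $\partial^n\varphi$ to $\partial^{n-1}\varphi$, then evaluate the elementary $\xi_3$-integral $\int_{\Reals}[(x_1-\xi_1)^2+\xi_3^2]^{-3/2}\,d\xi_3 = 2/(x_1-\xi_1)^2$. The only cosmetic differences are that the paper works directly with general $n$ rather than reducing to $n=1$, and it regularizes by truncating $\xi_3\in[-R,R]$ and letting $R\to\infty$, whereas you excise a disk around the singular point; your treatment of the interchange of integration is in fact more careful than the paper's.
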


\begin{proof}
Consider the integral
\beq
\int_{-R}^{R} \int_{-\infty}^{\infty} \frac{1}{\sqrt{(x_1 - \xi_1)^2 + (x_3 - \xi_3)^2}} \frac{\partial^n \varphi}{\partial \xi_1^n} d\xi_1 d\xi_3,
\eeq
and integrate by parts the inner integral, to obtain
\beq
\label{part}
\int_{-R}^{R} \left. \frac{1}{\sqrt{(x_1 - \xi_1)^2 + (x_3 - \xi_3)^2}}
\frac{\partial^{n-1} \varphi}{\partial \xi_1^{n-1}} \right|_{\xi_1 = -\infty}^{\xi_1 = \infty} d\xi_3 -
\int_{-R}^{R} \int_{-\infty}^{\infty} \frac{x_1 - \xi_1}{[(x_1 - \xi_1)^2 + (x_3 - \xi_3)^2]^{3/2}}
\frac{\partial^{n-1} \varphi}{\partial \xi_1^{n-1}} d\xi_1 d\xi_3.
\eeq
The first term in (\ref{part}) is zero and the second term can be rewritten changing the order of integration in the form
\beq
\label{inner}
-\int_{-\infty}^{\infty} \left( \int_{-R}^{R} \frac{1}{[(x_1 - \xi_1)^2 + (x_3 - \xi_3)^2]^{3/2}} d\xi_3 \right) (x_1 - \xi_1)
\frac{\partial^{n-1} \varphi}{\partial \xi_1^{n-1}} d\xi_1.
\eeq
Taking the limit as $R \to \infty$, the inner integral in (\ref{inner}) gives
\beq
\lim_{R \to \infty} \int_{-R}^{R} \frac{1}{[(x_1 - \xi_1)^2 + (x_3 - \xi_3)^2]^{3/2}} d\xi_3 = \frac{2}{(x_1 - \xi_1)^2},
\eeq
which concludes the proof.
\end{proof}

\begin{theorem}
\label{th2}
Let $\varphi(x_1,x_3)$ be a function defined on $\Reals^2$ and possess the Fourier transform with respect to $x_1$ and $x_3$, denoted by
$\overline{\varphi}(\beta,\lambda)$. Then
\beq
\barr{l}
\ds \int_{-\infty}^{\infty} \int_{-\infty}^{\infty} \frac{x_1 - \xi_1}{(x_1 - \xi_1)^2 + (x_3 - \xi_3)^2} \frac{\partial \varphi}{\partial \xi_1} d\xi_1 d\xi_3 +
\int_{-\infty}^{\infty} \int_{-\infty}^{\infty} \frac{x_3 - \xi_3}{(x_1 - \xi_1)^2 + (x_3 - \xi_3)^2} \frac{\partial \varphi}{\partial \xi_3} d\xi_1 d\xi_3 = \\[6mm]
\ds \hspace{120mm} = 2\pi \varphi(x_1,x_3).
\earr
\eeq
\end{theorem}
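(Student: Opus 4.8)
The plan is to carry the identity into Fourier space, where the two convolution kernels become elementary rational multipliers and the whole statement collapses to the algebraic identity $\beta^2+\lambda^2=\rho^2$. This is exactly why the hypothesis is phrased in terms of the existence of $\overline{\varphi}(\beta,\lambda)$: the left-hand side is a pair of convolutions, so its transform is a product of transforms, and everything will be read off by inspection.

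First I would record the transforms of the two kernels. With the paper's convention $\overline{f}(\beta,\lambda)=\iint f(x_1,x_3)e^{i\beta x_1+i\lambda x_3}\,dx_1\,dx_3$ one has $\partial/\partial x_j\mapsto -i\beta_j\overline{f}$ (writing $\beta_1=\beta$, $\beta_3=\lambda$), and the classical two-dimensional formula $\mathcal{F}\big[\log\sqrt{x_1^2+x_3^2}\big]=-2\pi/\rho^2$ holds modulo a distribution supported at the origin, where $\rho=\sqrt{\beta^2+\lambda^2}$ (this is just the statement that $(2\pi)^{-1}\log|x|$ is the fundamental solution of the Laplacian in the plane). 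Since $x_j/(x_1^2+x_3^2)=\partial/\partial x_j\,\log\sqrt{x_1^2+x_3^2}$, differentiating gives
\[
\mathcal{F}\!\left[\frac{x_1}{x_1^2+x_3^2}\right]=\frac{2\pi i\beta}{\rho^2},\qquad
\mathcal{F}\!\left[\frac{x_3}{x_1^2+x_3^2}\right]=\frac{2\pi i\lambda}{\rho^2},
\]
again modulo a point mass at $\xi=0$, which I will simply discard once these are paired with $\partial\varphi/\partial x_j$: the latter has transform $-i\beta_j\overline{\varphi}$, which vanishes at the origin, so any delta at $\xi=0$ is annihilated and the formal step is legitimate.

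Next I would invoke the convolution theorem. The two terms on the left are precisely $\dfrac{x_1}{x_1^2+x_3^2}\circledast\dfrac{\partial\varphi}{\partial x_1}$ and $\dfrac{x_3}{x_1^2+x_3^2}\circledast\dfrac{\partial\varphi}{\partial x_3}$, whose transforms are
\[
\frac{2\pi i\beta}{\rho^2}\cdot(-i\beta\,\overline{\varphi})=\frac{2\pi\beta^2}{\rho^2}\overline{\varphi},
\qquad
\frac{2\pi i\lambda}{\rho^2}\cdot(-i\lambda\,\overline{\varphi})=\frac{2\pi\lambda^2}{\rho^2}\overline{\varphi}.
\]
Adding, the transform of the sum is $2\pi(\beta^2+\lambda^2)\rho^{-2}\overline{\varphi}=2\pi\overline{\varphi}$; inverting the Fourier transform yields $2\pi\varphi(x_1,x_3)$, which is the claim.

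The main obstacle is rigour rather than strategy: the kernels $x_j/(x_1^2+x_3^2)$ are not integrable at infinity, so the convolutions must be understood as principal values and the manipulations justified in the sense of tempered distributions (the point being, as noted above, that $\varphi$ vanishing at infinity together with its derivative makes $\partial\varphi/\partial x_j$ a well-behaved object whose transform kills the origin ambiguity in $\mathcal{F}[\log|x|]$). An alternative, fully real-variable route — parallel to the proof of Theorem~\ref{th1} — is to truncate the $x_3$-integration to $[-R,R]$, evaluate the resulting one-dimensional inner integrals explicitly, and pass to the limit $R\to\infty$; I would present the Fourier argument as the main line and indicate the truncation argument as the way to make every step airtight.
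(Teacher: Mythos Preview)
Your proposal is correct and follows essentially the same route as the paper: both arguments pass to Fourier space and reduce the identity to the trivial decomposition $\overline{\varphi}=\dfrac{\beta^2}{\beta^2+\lambda^2}\,\overline{\varphi}+\dfrac{\lambda^2}{\beta^2+\lambda^2}\,\overline{\varphi}$, then read off the convolution kernels from the transforms of $\beta/\rho^2$ and $\lambda/\rho^2$. The only cosmetic difference is that the paper obtains these kernel transforms from the explicitly computed integrals in its Appendix (Gradshteyn--Ryzhik formulae), whereas you recover them by differentiating the fundamental solution $\log|x|$ of the planar Laplacian; either way the substance is identical.
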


\begin{proof}
Consider the Fourier transform $\overline{\varphi}(\beta,\lambda)$ and write
\beq
\label{theo2}
\overline{\varphi}(\beta,\lambda) = \frac{\beta^2}{\beta^2 + \lambda^2} \cdot \overline{\varphi}(\beta,\lambda) +
\frac{\lambda^2}{\beta^2 + \lambda^2} \cdot \overline{\varphi}(\beta,\lambda).
\eeq
To conclude the proof, apply the inversion Fourier transform to both side of (\ref{theo2}) and use the first two formulae given in Tab. \ref{tab01}.
\end{proof}

\begin{theorem}
\label{th3}
Let $\varphi(x_1,x_3)$ be a function defined on $\Reals^2$ and possess the Fourier transform with respect $x_1$ and $x_3$, denoted by
$\overline{\varphi}(\beta,\lambda)$. Then
\beq
\int_{-\infty}^{\infty} \int_{-\infty}^{\infty} \frac{x_3 - \xi_3}{(x_1 - \xi_1)^2 + (x_3 - \xi_3)^2} \frac{\partial \varphi}{\partial \xi_1} d\xi_1 d\xi_3 =
\int_{-\infty}^{\infty} \int_{-\infty}^{\infty} \frac{x_1 - \xi_1}{(x_1 - \xi_1)^2 + (x_3 - \xi_3)^2} \frac{\partial \varphi}{\partial \xi_3} d\xi_1 d\xi_3.
\eeq
\end{theorem}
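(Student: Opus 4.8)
The plan is to follow the pattern of the proof of Theorem~\ref{th2}: recognise each double integral as a convolution, pass to the Fourier transform in $x_1$ and $x_3$, and reduce the claimed equality to a trivial algebraic identity in the transformed variables.

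Concretely, the left-hand side is the convolution $\bigl(x_3/(x_1^2+x_3^2)\bigr)\circledast(\partial\varphi/\partial x_1)$ and the right-hand side is $\bigl(x_1/(x_1^2+x_3^2)\bigr)\circledast(\partial\varphi/\partial x_3)$. Applying the Fourier transform, the convolution theorem together with $\mF[\partial\varphi/\partial x_1]=-i\beta\,\overline{\varphi}$ and $\mF[\partial\varphi/\partial x_3]=-i\lambda\,\overline{\varphi}$ turns the left-hand side into $\mF\!\bigl[x_3/(x_1^2+x_3^2)\bigr](-i\beta)\,\overline{\varphi}$ and the right-hand side into $\mF\!\bigl[x_1/(x_1^2+x_3^2)\bigr](-i\lambda)\,\overline{\varphi}$. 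Using the Fourier transforms of the Riesz-type kernels that already underlie Table~\ref{tab01}, namely $\mF\!\bigl[x_1/(x_1^2+x_3^2)\bigr]=2\pi i\beta/(\beta^2+\lambda^2)$ and $\mF\!\bigl[x_3/(x_1^2+x_3^2)\bigr]=2\pi i\lambda/(\beta^2+\lambda^2)$, both transformed sides collapse to the single function $2\pi\beta\lambda\,(\beta^2+\lambda^2)^{-1}\overline{\varphi}(\beta,\lambda)$, and inverting the transform gives the identity. Equivalently, one just writes the trivial equality $\tfrac{\beta\lambda}{\rho^2}\overline{\varphi}=\tfrac{\beta\lambda}{\rho^2}\overline{\varphi}$ and reads off its inverse transform from the two ``mixed'' entries of Table~\ref{tab01}, exactly as Theorem~\ref{th2} uses the first two entries.

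There is also a tempting one-line real-variable argument: move the derivative from $\varphi$ onto the kernel and invoke the elementary identity $\partial_{x_1}\!\bigl(x_3/(x_1^2+x_3^2)\bigr)=-2x_1x_3/(x_1^2+x_3^2)^2=\partial_{x_3}\!\bigl(x_1/(x_1^2+x_3^2)\bigr)$. I expect the main, and essentially the only, obstacle to be making this rigorous: the kernels $x_j/(x_1^2+x_3^2)$ are not absolutely integrable near the origin and their first derivatives blow up like $r^{-2}$, so neither differentiation under the integral sign nor a naive reading of the singular integrals is justified in general. This is precisely why the hypotheses keep the derivative on $\varphi$ and require $\varphi$ to possess a Fourier transform; the Fourier computation above circumvents the difficulty, the distributional content being packaged into the kernel transforms (equivalently, into Table~\ref{tab01}), which are established just as in the proof of Theorem~\ref{th2}.
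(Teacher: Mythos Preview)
Your proposal is correct and coincides with the paper's own proof: the paper writes the trivial identity $\dfrac{\beta}{\beta^2+\lambda^2}\cdot\lambda\,\overline{\varphi}=\dfrac{\lambda}{\beta^2+\lambda^2}\cdot\beta\,\overline{\varphi}$ and then applies the inverse Fourier transform using the third and fourth entries of Table~\ref{tab01}, which is exactly the ``equivalent'' route you describe in your second paragraph.
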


\begin{proof}
Consider the Fourier transform $\overline{\varphi}(\beta,\lambda)$ and write
\beq
\label{theo3}
\frac{\beta}{\beta^2 + \lambda^2} \cdot \lambda \overline{\varphi}(\beta,\lambda) = \frac{\lambda}{\beta^2 + \lambda^2} \cdot \beta \overline{\varphi}(\beta,\lambda).
\eeq
To conclude the proof, apply the inversion Fourier transform to both side of (\ref{theo2}) and use the third and fourth formulae in Tab. \ref{tab01}.
\end{proof}

\section{Double Fourier transform inversion for the 3D case}
\setcounter{equation}{0}
\label{appdoub}

The following four integrals will be used in the analysis (Gradshteyn and Ryzhik, 1965):
\beq
\label{1}
\int_0^\infty \frac{\cos ax}{\sqrt{\beta^2 + x^2}} dx = K_0(a\beta),\quad
a > 0,\quad \Re \beta > 0,
\eeq
\beq
\label{2}
\int_0^\infty \frac{x\sin ax}{\beta^2 + x^2} dx = \frac{\pi}{2} e^{-a\beta}, \quad a > 0, \quad \Re \beta > 0,
\eeq
\beq
\label{1a}
\int_0^\infty K_0(a\beta) \cos(b\beta) d\beta = \frac{\pi}{2} \frac{1}{\sqrt{a^2 + b^2}}, \quad \Re a > 0,
\eeq
\beq
\label{2a}
\int_0^\infty e^{-a\beta} \cos(b\beta) d\beta = \frac{a}{a^2 + b^2}, \quad a > 0,
\end{equation}
where $K_0(z)$ is the modified Bessel function of the second kind of zero order, satisfying the differential equation
\beq
z^2 y'' + z y' - z^2 y = 0.
\eeq
In some sense, the last two integrals are the Fourier inversion formulae for the first two.

We also shall make use of the fact that the Fourier transform of the convolution is given by the product of the Fourier transforms, which leads to
\beq
\label{3}
\mF^{-1}_{\beta}\mF^{-1}_\lambda \left[ \bar g(\beta,\lambda) \cdot \bar f(\beta,\lambda) \right] =
\int_{-\infty}^\infty \int_{-\infty}^\infty g(x_1 - \xi_1,x_3 - \xi_3) f(\xi_1,\xi_3) d\xi_1 d\xi_3.
\eeq

It is evident what to do with the factors $\beta$ and $\lambda$. Namely, these factors arise when taking the Fourier transform of the derivatives instead of
the function itself:
\beq
\beta \bar f(\beta,\lambda) = i \overline{\frac{\partial f}{\partial x_1}}(\beta,\lambda), \quad
\lambda \bar f(\beta,\lambda) = i \overline{\frac{\partial f}{\partial x_3}}(\beta,\lambda).
\eeq
Analogous identities can be written for the factors $\beta^2$, $\lambda^2$ and $\beta\lambda$ involving higher order derivatives:
\beq
\beta^2 \bar f(\beta,\lambda) = -\overline{\frac{\partial^2 f}{\partial x_1^2}}(\beta,\lambda), \quad
\lambda^2 \bar f(\beta,\lambda) = -\overline{\frac{\partial^2 f}{\partial x_3^2}}(\beta,\lambda), \quad
\beta\lambda \bar f(\beta,\lambda) = -\overline{\frac{\partial^2 f}{\partial x_1 \partial x_3}}(\beta,\lambda).
\eeq
Taking into account the following identity:
\beq
\sqrt{\beta^2 + \lambda^2} = \frac{\beta^2 + \lambda^2}{\sqrt{\beta^2 + \lambda^2}},
\eeq
we need to evaluate only the following three {\it basic} integrals:
\beq
\label{uno}
\mF^{-1}_{\beta}\mF^{-1}_\lambda \left[\frac{1}{\sqrt{\beta^2 + \lambda^2}}\right] =
\frac{1}{4\pi^2} \int_{-\infty}^{\infty} \int_{-\infty}^{\infty} \frac{1}{\sqrt{\beta^2 + \lambda^2}} e^{-ix_1\beta} e^{-ix_3\lambda} d\beta d\lambda,
\eeq
\beq
\label{due}
\mF^{-1}_{\beta}\mF^{-1}_\lambda \left[\frac{\lambda}{\beta^2 + \lambda^2}\right] =
\frac{1}{4\pi^2} \int_{-\infty}^{\infty} \int_{-\infty}^{\infty} \frac{\lambda}{\beta^2 + \lambda^2} e^{-ix_1\beta} e^{-ix_3\lambda} d\beta d\lambda,
\eeq
\beq
\label{tre}
\mF^{-1}_{\beta}\mF^{-1}_\lambda \left[\frac{\beta}{\beta^2 + \lambda^2}\right] =
\frac{1}{4\pi^2} \int_{-\infty}^{\infty} \int_{-\infty}^{\infty} \frac{\beta}{\beta^2 + \lambda^2} e^{-ix_1\beta} e^{-ix_3\lambda} d\beta d\lambda.
\eeq
Let us consider the first integral (\ref{uno}). By (\ref{1}) we obtain
\beq
\label{uno.1}
\int_{-\infty}^{\infty} e^{-ix_3\lambda} \frac{d\lambda}{\sqrt{\beta^2 + \lambda^2}} =
2 \int_0^{\infty} \frac{\cos(|x_3|\lambda)}{\sqrt{\beta^2 + \lambda^2}} d\lambda =
2 \left\{
\barr{ll}
K_0(|x_3|\beta), & \Re\beta > 0, \\[3mm]
K_0(-|x_3|\beta), & -\Re\beta > 0.
\earr
\right.
\eeq
Making recourse to (\ref{1a}), we conclude that
\beq
\label{uno.2}
\barr{ll}
\ds \mF^{-1}_{\beta} \mF^{-1}_\lambda \left[\frac{1}{\sqrt{\beta^2 + \lambda^2}}\right] &
\ds = \frac{1}{2\pi^2} \int_{-\infty}^{\infty} e^{-ix_1\beta} K_0(|x_3| |\beta|) d\beta, \\[5mm]
 &
\ds = \frac{1}{\pi^2} \int_{0}^{\infty} K_0(|x_3|\beta)\cos(x_1\beta) d\beta, \\[5mm]
 &
\ds = \frac{1}{2\pi} \frac{1}{\sqrt{x_1^2 + x_3^2}}.
\earr
\end{equation}
We note the the function transforms into itself (multiplied by a constant factor) after double Fourier transform.

Let us now consider the second integral (\ref{due}). By means of (\ref{2}) we write
\beq
\label{due.1}
\int_{-\infty}^{\infty} e^{-ix_3\lambda} \frac{\lambda d\lambda}{\beta^2 + \lambda^2} =
-2i\sign(x_3) \int_0^\infty \frac{\lambda\sin(|x_3|\lambda)}{\beta^2 + \lambda^2} d\lambda =
-\pi i\sign(x_3)
\left\{
\barr{ll}
e^{-|x_3|\beta}, & \Re \beta > 0, \\[3mm]
e^{|x_3|\beta}, & -\Re \beta > 0.
\earr
\right.
\eeq
We can now use the formula (\ref{2a}) to obtain
\beq
\label{due.2}
\barr{ll}
\ds \mF^{-1}_{\beta}\mF^{-1}_\lambda \left[\frac{\lambda}{\beta^2 + \lambda^2}\right] &
\ds = \frac{-i\sign(x_3)}{4\pi} \int_{-\infty}^\infty e^{-ix_1\beta} e^{-|x_3||\beta|} d\beta, \\[5mm]
 &
\ds = \frac{-i\sign(x_3)}{2\pi} \int_0^\infty e^{-|x_3|\beta} \cos(x_1\beta) d\beta, \\[5mm]
 &
\ds = \frac{-i\sign(x_3)}{2\pi}\frac{|x_3|}{x_1^2 + x_3^2} = \frac{-i}{2\pi} \frac{x_3}{x_1^2 + x_3^2}.
\earr
\eeq
Again, the function transforms into itself (multiplied by a constant factor) after double Fourier transform.

The last integral (\ref{tre}) can be evaluated in a similar way and we have
\beq
\label{tre.2}
\mF^{-1}_{\beta}\mF^{-1}_\lambda \left[\frac{\beta}{\beta^2 + \lambda^2}\right] = \frac{-i}{2\pi} \frac{x_1}{x_1^2 + x_3^2}.
\eeq
Equations (\ref{uno.2}), (\ref{due.2}) and (\ref{tre.2}) are the basic integrals we use to perform the double Fourier inversion for the 3D case
in Sec.\ \ref{sec3d}. The complete list of inversion formulae is given in Tab.\ \ref{tab01}.

\begin{table}[!htcb]
\centering
\ra{2.2}
\begin{tabular}{@{}ll@{}}
\toprule
Function: &
Double Fourier inversion: \\
$\bar f(\beta,\lambda)$ &
$\mF_\beta^{-1}\mF_\lambda^{-1} [\bar f(\beta,\lambda)]$ \\
\midrule
$\ds \frac{\beta}{\beta^2 + \lambda^2} \cdot \beta \bar \varphi$ &
$\ds \frac{1}{2\pi} \frac{x_1}{x_1^2 + x_3^2} \circledast \frac{\partial \varphi}{\partial x_1}$ \\
$\ds \frac{\lambda}{\beta^2 + \lambda^2} \cdot \lambda \bar \varphi$ &
$\ds \frac{1}{2\pi} \frac{x_3}{x_1^2 + x_3^2} \circledast \frac{\partial \varphi}{\partial x_3}$ \\
$\ds \frac{\beta}{\beta^2 + \lambda^2} \cdot \lambda \bar \varphi$ &
$\ds \frac{1}{2\pi} \frac{x_1}{x_1^2 + x_3^2} \circledast \frac{\partial \varphi}{\partial x_3}$ \\
$\ds \frac{\lambda}{\beta^2 + \lambda^2} \cdot \beta \bar \varphi$ &
$\ds \frac{1}{2\pi} \frac{x_3}{x_1^2 + x_3^2} \circledast \frac{\partial \varphi}{\partial x_1}$ \\
$\ds \frac{1}{\sqrt{\beta^2 + \lambda^2}} \cdot \beta \bar \varphi$ &
$\ds \frac{i}{2\pi} \frac{1}{\sqrt{x_1^2 + x_3^2}} \circledast \frac{\partial \varphi}{\partial x_1}$ \\
$\ds \frac{1}{\sqrt{\beta^2 + \lambda^2}} \cdot \lambda \bar \varphi$ &
$\ds \frac{i}{2\pi} \frac{1}{\sqrt{x_1^2 + x_3^2}} \circledast \frac{\partial \varphi}{\partial x_3}$ \\
$\ds \frac{1}{\sqrt{\beta^2 + \lambda^2}} \cdot \beta^2 \bar \varphi$ &
$\ds -\frac{1}{2\pi} \frac{1}{\sqrt{x_1^2 + x_3^2}} \circledast \frac{\partial^2 \varphi}{\partial x_1^2}$ \\
$\ds \frac{1}{\sqrt{\beta^2 + \lambda^2}} \cdot \lambda^2 \bar \varphi$ &
$\ds -\frac{1}{2\pi} \frac{1}{\sqrt{x_1^2 + x_3^2}} \circledast \frac{\partial^2 \varphi}{\partial x_3^2}$ \\
$\ds \frac{1}{\sqrt{\beta^2 + \lambda^2}} \cdot \beta\lambda \bar \varphi$ &
$\ds -\frac{1}{2\pi} \frac{1}{\sqrt{x_1^2 + x_3^2}} \circledast \frac{\partial^2 \varphi}{\partial x_1 \partial x_3}$ \\
$\ds \sqrt{\beta^2 + \lambda^2} \cdot \bar \varphi$ &
$\ds -\frac{1}{2\pi} \frac{1}{\sqrt{x_1^2 + x_3^2}} \circledast
\left(\frac{\partial^2 \varphi}{\partial x_1^2} + \frac{\partial^2 \varphi}{\partial x_3^2}\right)$ \\
\bottomrule
\end{tabular}
\caption{Double Fourier inversion formulae}
\label{tab01}
\end{table}

\end{document}